\providecommand{\eg}{\emph{e.g.},\xspace}
\newcommand{\ie}{\emph{i.e.},\xspace}
\newcommand{\aka}{\emph{a.k.a.},\xspace}
\newcommand{\wrt}{\emph{w.r.t.},\xspace}
\newcommand\figref[1]{Figure~\ref{#1}}
\newcommand\secref[1]{Section~\ref{#1}}
\newcommand\equref[1]{Equation~(\ref{#1})}
\newcommand\algoref[1]{Algo.~\ref{#1}}
\newcommand\lineref[1]{Line~\ref{#1}}
\newcommand\lemref[1]{Lemma~\ref{#1}}
\newcommand\expref[1]{Example~\ref{#1}}
\newcommand\defref[1]{Definition~\ref{#1}}
\newcommand{\fakeparagraph}[1]{\vspace{1mm}\noindent\textbf{#1.}}
\newtheorem{definition}{Definition}     %[section]
\newtheorem{example}{Example}     %[section]
\newtheorem{lemma}{Lemma}     %[section]
\newtheorem{theorem}{Theorem}     %[section]
\begin{document}

\title{Efficient and Accurate Range Counting on
Privacy-preserving Spatial Data Federation}

\author[1]{Maocheng Li}
\author[1,2]{Yuxiang Zeng}
\author[1,3]{Lei Chen}
\affil[1]{The Hong Kong University of Science and Technology, Hong Kong, China
\{csmichael,leichen\}@cse.ust.hk}

\affil[2]{School of Computer Science and Engineering, Beihang University, China
turf1013@buaa.edu.cn}
\affil[3]{The Hong Kong University of Science and Technology (Guangzhou), Guangzhou,
China}

\date{}
\maketitle

\begin{abstract}

A spatial data federation is a collection of data owners (\eg a consortium of taxi companies), and collectively it could provide better location-based services (LBS). For example, car-hailing services over a spatial data federation allow end users to easily pick the best offers. We focus on the range counting queries, which are primitive operations in spatial databases but received little attention in related research, especially considering the privacy requirements from data owners, who are reluctant to disclose their proprietary data. We propose a grouping-based technical framework named FedGroup, which groups data owners without compromising privacy, and achieves superior query accuracy (up to 50\% improvement) as compared to directly applying existing privacy mechanisms achieving Differential Privacy (DP). Our experimental results also demonstrate that FedGroup runs orders-of-magnitude faster than traditional Secure Multiparty Computation (MPC) based method, and FedGroup even scales to millions of data owners, which is a common setting in the era of ubiquitous mobile devices. \looseness=-1
\footnotetext[1]{Copyright may be transferred without notice, after which this version no longer be accesiible.}
\end{abstract}
\section{Introduction}
\label{sec:intro}

A federated database system is a collection of multiple cooperating but autonomous database systems \cite{sheth1990federated}. It reflects the real-world situation that the entire database of customer records is usually divided among different companies, and each company owns a distinct share of the entire database. Federated database systems have drawn many research interests \cite{shi2021efficient,bater2018shrinkwrap,DBLP:journals/pvldb/BaterEEGKR17} and been deployed in real-world applications. For example, multiple hospitals participate in an alliance to collectively contribute their data for discovering new drugs \cite{DBLP:journals/pvldb/BaterEEGKR17}. 

In the specific domain of location-based services (LBS) systems, a \textit{spatial data federation} considers a federation of \textit{spatial data}, such as locations or trajectories. Similar to general-purposed federated databases, spatial data federation has also seen a wide range of real-world applications. For example, Alibaba's AMap \cite{technode2021} in China provides car-hailing services over a federation of different taxi companies, which enables users to have more flexibility and easily pick the best offers. 

To support the aforementioned LBS applications, query processing techniques are needed for a wide range of spatial queries. Among them, \textit{range counting queries} are one of the most important primitive operations. Range counting queries return the count of the spatial objects located within a given range. A car-hailing federated system may frequently issue the following range counting query: \textit{"how many cars are within 500 meters of the location of a customer?"} \looseness=-1

\begin{figure}[htbp]
    \centering
	\includegraphics[width=0.5\linewidth]{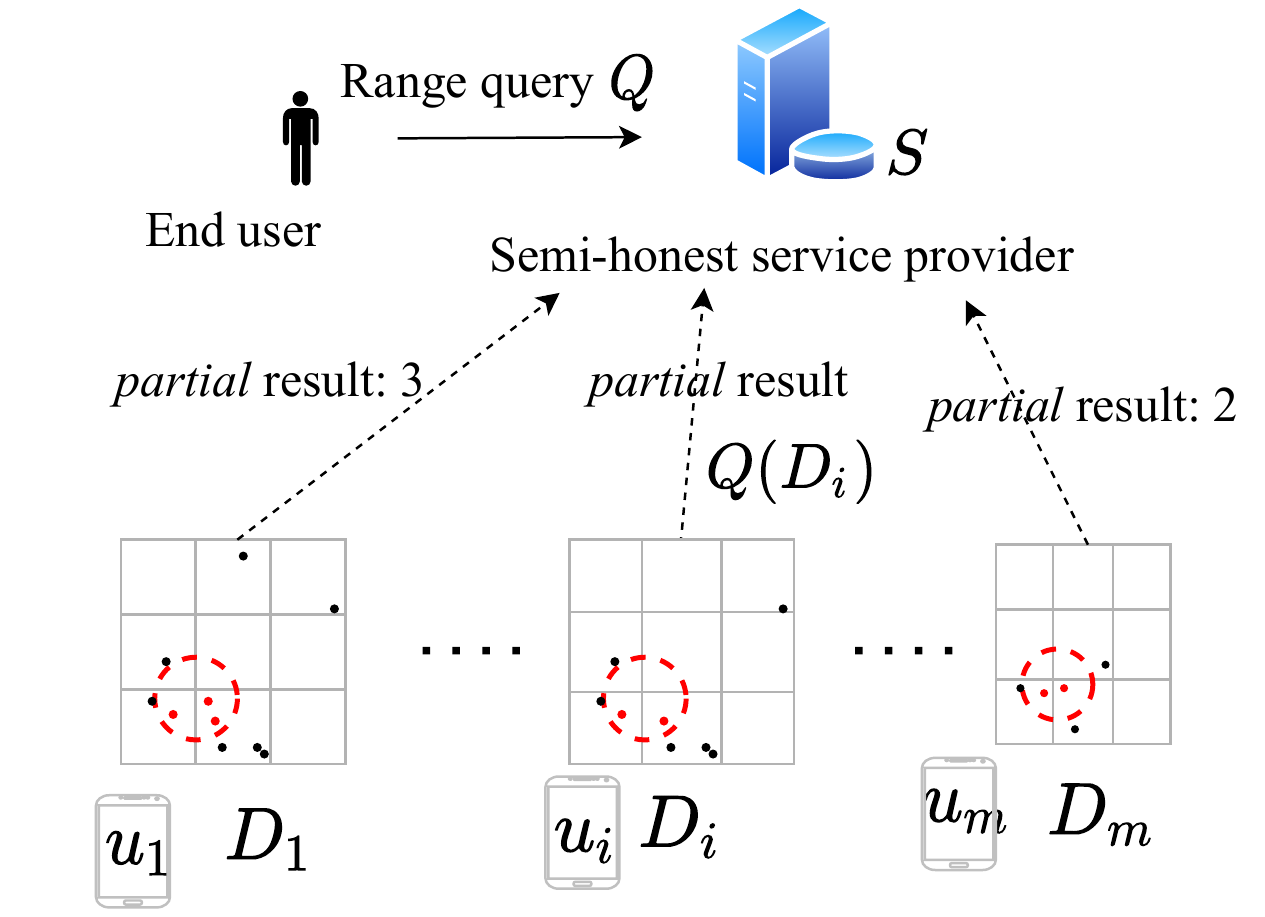}
	\caption{\small FPRC problem. Each one of $m$ data owners $u_i$ has a spatial database $D_i$. An end user issues a range counting query $Q$ over the spatial data federation: $D_1 \cup \ldots \cup D_m$.}
	\label{fig:problem}
\end{figure}

In this paper, we target at the \textit{\underline{F}ederated \underline{P}rivacy-preserving \underline{R}ange \underline{C}ounting (FPRC)} problem (\figref{fig:problem}), with a special focus on the large-scale spatial data federations, where the number of data owners is large (\eg more than 10K). Such large-scale settings have recently drawn more attentions due to the wide adoption of mobile devices. For example, in a decentralized crowdsourcing platform \cite{TongJoS17}, each data owner is a mobile device, and each user holds his own data and is not willing to disclose any unprotected private data to other parties. In such applications, the number of data owners could easily scale up to the order of millions. 

It is challenging to support range counting queries over a spatial data federation, because each data owner is often reluctant to disclose its proprietary data. Directly applying traditional Secure Multiparty Computation (MPC) techniques or Differential Privacy (DP) solutions would either be too computationally expensive, or bring too much noise to outweigh any meaningful results for the LBS applications. As demonstrated by our experiments in \secref{sec:experiment}, the MPC baseline is orders-of-magnitude slower than our solution, requiring more than 13 hours when the input size is large, while the DP baseline loses more than 50\% accuracy. Due to the interactive nature of the LBS applications, an ideal solution should offer excellent accuracy, practical efficiency, and proven privacy guarantee. 

To tackle the challenges of FPRC problem, we propose a grouping-based technical framework named FedGroup. To reduce the large amount of noise injected by directly applying DP for each data owner, we assign similar data owners (mobile users in our setting) into groups. We argue that injecting DP noise in each group (as opposed to injecting noise for each data owner) suffices to provide privacy guarantee. The reason is due to the fact that certain users may have social relationships with other data owners (\eg they belong to the same family), and they share similar trajectories and whereabouts. 

To achieve effective grouping, we adopt a commonly accepted concept named $(k,r)$-core in graph analysis \cite{DBLP:journals/pvldb/ZhangZQZL17} to ensure that the group has strong connections inside (here the nodes in the graph are data owners and the edges between nodes indicate that they have strong spatial similarities, \ie their trajectories are similar). We formulate an optimization problem to find the best way to put users into groups in order to minimize the error introduced by DP noise in the FPRC problem. We show that the problem is NP-hard and provide an efficient greedy-based algorithm with a performance guarantee. 

Last but not least, the grouping of users relies on how to measure the spatial similarity between users (\eg their trajectories). It is a non-trivial task to construct the similarity graph between each pair of users, considering that the trajectories are private information of each data owner. We devise a novel hybrid solution to combine DP and MPC to achieve accurate similarity graph construction. \looseness=-1

To summarize, we make the following contributions: 

\begin{itemize}[leftmargin=*]
	\setlength\itemsep{0.1em}

	\item We develop a novel technical framework FedGroup, to utilize an offline grouping step to largely reduce the amount of noise needed in the Federated Privacy-preserving Range Counting (FPRC) problem. We introduce the FPRC problem in \secref{sec:problem}. 
 
	\item In order to achieve effective grouping in FedGroup, we show that it is an NP-hard problem to achieve optimal grouping to minimize the amount of noise needed in FPRC problem. Then, we propose an efficient greedy-based algorithm with a performance guarantee to achieve effective grouping. 
	
	\item We also devise effective techniques to construct the spatial similarity graph between users by combining DP and MPC techniques, considering data owners' trajectories as private information. The details are presented in \secref{sec:fedgroup_details}. 
	
    \item We conduct extensive experiments to validate the effectiveness and efficiency of our proposed solution. The results are shown in \secref{sec:experiment}. 

\end{itemize}

In addition, we %introduce the background in \secref{sec:background}, 
review related works in \secref{sec:relatedWork} and conclude in \secref{sec:conclusion}.

\section{Problem Definition}
\label{sec:problem}
In this section, we introduce some basic concepts, the adversary model, and the definition of the Federated Privacy-preserving Range Counting (FPRC) problem. 
Due to page limitations, a toy example is provided in the appendix. 

\subsection{Basic Concepts}
\label{subsec:basic_concept}

\begin{definition}[Location]\label{def:location}
	A location $l=(x,y)$ represents a 2-dimensional spatial point with the coordinates $(x,y)$ on an Euclidean space. 
\end{definition}

Locations in an Euclidean space is commonly seen in existing work \cite{DBLP:conf/sigmod/SheT0S17,DBLP:conf/waim/GaoTSSCX16,DBLP:conf/dasfaa/TaoZZTC018}.

\begin{definition}[Data owner]\label{def:data_owner}
	There are $m$ data owners $u_1, \ldots, u_m$. Each data owner $u_i$ owns a spatial database $D_i$. Each spatial database $D_i$ consists of multiple data records (\textit{locations} as in \defref{def:location}) , \ie $D_i=\{l_1, \ldots, l_{|D_i|}\}$. 
\end{definition}

We also refer to each data owner's database $D_i$ as a \textit{data silo}. We let $D= \cup_{i=1}^m D_i$ denotes the collection of all data silos, \ie the union of all data records from each data silo. 

\begin{definition}[Range counting query]\label{def:range_count}
A range counting query $Q$ asks for how many data records are within distance $r$ to the 	query location $q_0$, \ie
\begin{equation}
		Q(r, q_0) = \sum_{l \in D} \mathbb{I}(d(l, q_0)<r)
\end{equation}
where $\mathbb{I}(\cdot)$ is the indicator function which equals to 1 if the predicate $d(l, q_0)<r$ is true, and 0 otherwise. $d(\cdot)$ is the Euclidean distance function. $l \in D$ is any record from the union of all data records from each data silo. 
\end{definition}

\subsection{Privacy and Adversary Model}
\label{subsec:adversary_model}

There are mainly three parties (roles) in our spatial data federation (see \figref{fig:problem}): (1) the \textbf{end user} who issues the query; (2) the \textbf{service provider} who receives the query and coordinates the execution of the query; and (3) each \textbf{data owner} $u_i$ who owns the private spatial database $D_i$. 

We assume that all parties are \textit{semi-honest}, meaning that they are \textit{curious but not malicious}. They are \textit{curious} about other parties' private information, but they honestly follow and execute system protocols. The setting has been widely adopted in recent privacy-preserving LBS related applications \cite{to2018,DBLP:conf/icde/TaoTZSC020,DBLP:journals/vldb/TongZZCS20}.

\subsection{Federated Privacy-Preserving Range Counting problem}
\label{subsec:problem}
Based on the previous concepts and the adversary model, we now define the Federated Privacy-preserving Range Counting (FPRC) problem as follows.

\begin{definition}[Federated Privacy-preserving Range Counting problem]\label{def:problem}
	Given a federation of $m$ spatial databases $D_1, \ldots, D_m$, a range counting query $Q(q_0, r)$, and a privacy parameter (\aka privacy budget) $\epsilon$, the FPRC problem asks for a query answer $\tilde{Q}$, with the following privacy requirements:
	\begin{itemize}[leftmargin=*]
		\item R1. The computed final result $\tilde{Q}$ satisfies $\epsilon$-Differential Privacy (DP), where the definition of neighboring databases refers to changing one single record in any data silo $D_i$.
		\item R2. The intermediate result disclosed by any data silo $D_i$ satisfies $\epsilon$-DP.  
		\item R3. The private inputs for each data silo $D_i$ are \textit{confidential} if there is any multiparty computation involved. 
	\end{itemize}
\end{definition}

Two baselines, each respectively based on MPC and DP, are proposed. The MPC baseline directly applies MPC technique to compute a secure summation over the query result from each data silo. The DP baseline injects an instance of Laplace noise to each data silo, and collects the aggregated noisy summation. 
% The details are deferred to the appendix. 

As mentioned in \secref{sec:intro}, the \textbf{key challenges} to solve the FPRC problem are threefold: 1) privacy: each data owner is reluctant to disclose its sensitive data; 2) efficiency: directly applying MPC results an impractical solution; 3) accuracy: since the number of data owners could reach the order of millions, the scale of total noise injected by the DP baseline is too large.

\section{Our Solution FedGroup}
\label{sec:fedgroup_details}

In this section, we introduce the technical details of FedGroup, which addresses the challenges of the FPRC problem. We highlights its overall workflow first (\figref{fig:fedgroup_workflow} is an illustrative figure), and then introduce the details in each of the steps: 1) constructing the spatial similarity graph; 2) finding groups given the similarity graph; and 3) partial answers and aggregation. 
Due to page limitations, please refer to the appendix for more examples and detailed proofs. 

\fakeparagraph{Key Idea and Intuition}
In our setting of the Federated Privacy-preserving Range Counting (FPRC) problem (as in \secref{sec:problem}), the number of data owners ($m$) is potentially very large (\eg millions), as each data owner could be a mobile user. The key idea of FedGroup is based on the following observation: certain mobile users have strong connections (due to social ties or other collaboration relationships) with each other. If we consider such pairs of data owners to belong to the same group, then the privacy protection inside the group could be relaxed. For example, there is no need to consider privacy protection of users' trajectories within a family, as family members are very likely to know about other members' whereabouts during the day. Thus, DP noise is only needed cross groups, as opposed to the case in the DP baseline, where an instance of Laplace noise is injected for every mobile user. As a result, the overall noise injected for the query result for the FPRC problem could be greatly reduced.

\begin{figure}[t!]\centering \vspace{-1ex}
	\scalebox{0.28}[0.28]{\includegraphics{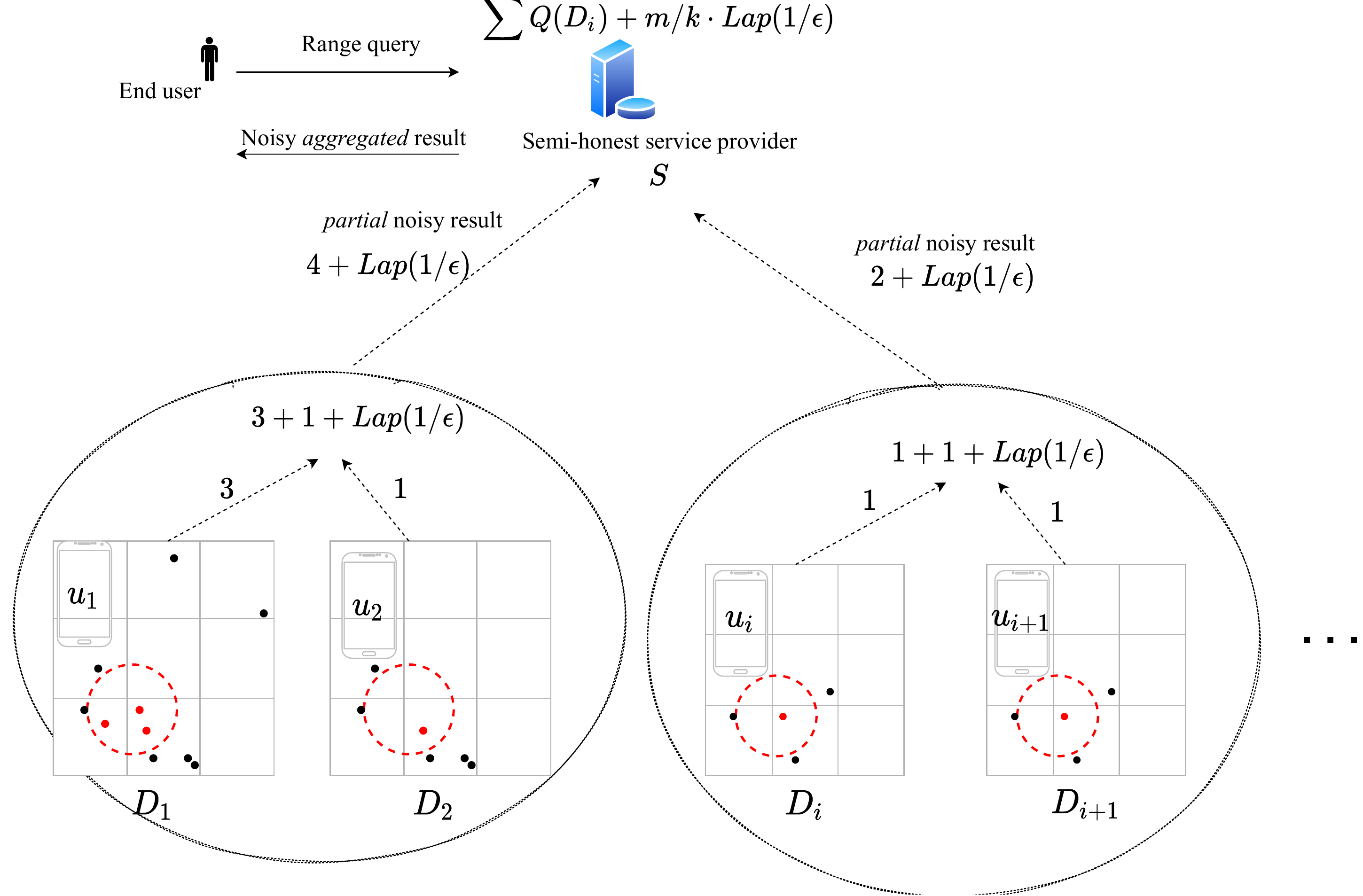}}
	\caption{\small FedGroup workflow. }

	\label{fig:fedgroup_workflow}
\end{figure}

\subsection{Spatial Similarity Graph Construction}
\label{subsec:sim_graph}

In this step, the goal is to construct an undirected graph $G_s=(V, E)$ between data owners. Each node in $V$ of the graph represents a data owner $u_i$, and an edge in $E$ between two nodes $u_i$ and $u_j$ indicates the similarity of the spatial databases $D_i$ and $D_j$ owned by $u_i$ and $u_j$, respectively. The weight of an edge measures the strength of the connection between $u_i$ and $u_j$, and we use cosine similarity as the weight of the edge. 

We first introduce the details of similarity graph, including how to measure the spatial similarity between data owners. Then, we consider the privacy-preserving setting, where each data owner's spatial database is considered private, and present our solutions of computing the similarity function between data owners in a differentially private manner. 

\fakeparagraph{Spatial similarity graph}
\label{subsubsec:spatial_sim_graph}
The key to construct the spatial similarity graph $G_s=(V,E)$ is measuring the weight of each undirected edge $e=(u_i, u_j)$ between two data owners $u_i,u_j\in V$. In the following section, we introduce how to compute the similarity function between the two data silos $D_i$ and $D_j$. 

For each data silo $D_i$, we use a grid structure $T$ to decompose the spatial domain into a number of grids, where $|T|$ denotes the number of grids. Then, within each grid, a count is computed to denote the number of spatial points (records) of $D_i$ falling inside the grid. Thus, we obtain a $|T|$-dimensional count vector  for the data silo $D_i$. 

Similarly, for data silo $D_j$, we could obtain another count vector $v_j=[c_{j,1}, c_{j,2}, \\ \ldots, c_{j,|T|}]$. Since the two vectors $v_i$ and $v_j$ are with the same length $|T|$, we could use \textit{cosine similarity} to measure their similarity as: 

\begin{equation}
  \text{sim}(D_i, D_j) = \cos(v_i, v_j) = v_i \cdot v_j / \|v_i\|\|v_j\|
  \label{eq:spatial_sim}
\end{equation}

The cosines similarity ranges from 0 to 1, and the larger the value is, the more similar the two count vectors are. In our application, the higher the cosine similarity between the two count vectors is, the more similar are the two associated spatial databases, owned by two different data owners.

\fakeparagraph{Constructing the graph} 
To construct the spatial similarity graph $G_s = (V, E)$, first we insert all the data owners as the node set $V$. Then, we iterate over all pairs of data owners $u_i, u_j \in V$, and measure the spatial similarity of their spatial databases $D_i$ and $D_j$, according to \equref{eq:spatial_sim}. The weight of the corresponding edge $e=(u_i, u_j)$ is set as $\text{sim}(D_i, D_j)$, \ie $w(e)=\text{sim}(D_i, D_j)$.

Without considering the privacy of each data silo $D_i$, constructing the spatial similarity graph $G_s$ is straightforward. % (as shown in \algoref{algo:construct_graph_plaintext}). 
The method is running on the service provider $S$. First, we create the node set $V$ of $G_s$ by inserting all data owners $u_1, \ldots, u_m$ into $V$. Then, the method simply iterates over the pairs of data owners $u_i,u_j$ for $i,j \in [m]$ and $i < j$, and calculates $\text{sim}(D_i, D_j)$. If the similarity is larger than the given threshold $r$, then an edge $(u_i, u_j)$ will be inserted into the edge set $E$ of $G_s$, with the edge weight set as $w(u_i, u_j)=\text{sim}(D_i, D_j)$. 

The time complexity is $O(m^2)$, since we iterate over all possible pairs of data owners. The space complexity required is also $O(m^2)$, because we need to store $G_s$ on $S$, including all the nodes and the edges (and their weights).

\fakeparagraph{Privacy-preserving computation}
Although it is straightforward to construct $G_s$ without considering the privacy of each data silo $D_i$, it is crucial to provide a proven privacy guarantee when we measure the spatial similarity between each pairs of data silos. The first and foremost motivation of this work is to consider privacy-preservation in the spatial federation setting, where each data owner's data silo $D_i$ is considered sensitive. 

Obviously, the non-private way of constructing the graph %\algoref{algo:construct_graph_plaintext} 
fails to meet the privacy requirement R2 in our problem definition in \defref{def:problem}. In R2, it requires that the intermediate results shared by each data owner satisfy $\epsilon$-DP. However, in the non-private version,  %\algoref{algo:construct_graph_plaintext}, 
the provided $D_i$ (or its count vectors $v_j$) are directly published by each data owner to the service provider $S$. Clearly it is a privacy failure. 
Furthermore, R1 in \defref{def:problem} may fail. It is not clear whether the constructed similarity graph $G_s$ provides any formal privacy guarantee \wrt changing of one record in any data silo $D_i$. 

Thus, we present a hybrid solution, which utilizes both Secure Multiparty Computation (MPC) and the standard Laplace mechanism in DP.

\fakeparagraph{Hybrid solution} The hybrid solution combines the advantages of both worlds -- the high accuracy of MPC (because the computations are \textit{exact}), and the high efficiency of DP (because each data silo performs the Laplace computation independently, and injecting Laplace noise itself is an efficient operation). 

The key idea of the hybrid solution is that: we use the efficient Laplace mechanism in DP to quickly compute a noisy weight $\tilde{w}$. Instead of directly using this noisy weight as a surrogate weight for the edges on the ground-truth graph, we only use it as a filtering mechanism. If the noisy weight $\tilde{w}$ is too small as compared to the given threshold $r$, we could quick conclude that the edge should \textit{not} be inserted to the graph. Or, if the noisy weight $\tilde{w}$ appears to be high, we quickly conclude that the edge should be inserted to the graph. If the noisy weight lies on the borderline, which shows uncertainty and may introduce errors, we invoke the computationally heavy MPC method to calculate the exact edge weight given the two private inputs from two data owners.

\algoref{algo:construct_graph_hybrid} shows the detailed steps. The inputs $\tilde{v_1}, \ldots, \tilde{v_m}$ are the noisy counts vectors, obtained by injecting Laplace noise to the counts vectors $v_1, \ldots, v_m$.

\begin{algorithm}[t]
	\DontPrintSemicolon
	\KwIn{$u_1, \ldots, u_m, \tilde{v_1}, \ldots, \tilde{v_m}, r, r_l, r_u$. }
	\KwOut{$\tilde{G_s}=(V, \tilde{E})$.}
    $V := \{u_1, \ldots, u_m\}$  \;
    $\tilde{E} := \{\}$ \;
	\ForEach{$u_i \in V$}{
    \ForEach{$u_j \in V$ and $j > i$} {
      $e := (u_i, u_j)$ \;
      $\tilde{w} := \cos(\tilde{v_i}, \tilde{v_j})$ \;
      \If{$e.\text{weight} < r_l$}{
        \label{line:hybrid_r_l}
        Continue \;
      }
      \ElseIf{$\tilde{w} > r_u$}{
        \label{line:hybrid_r_u}
        $e.\text{weight} := \tilde{w}$\;  
        Insert $e$ to $\tilde{E}$ \;
      }
      \Else{
        \label{line:hybrid_r_m}
        $e.\text{weight} :=$ Secure\_Sim($D_i, D_j$) \;
      \If{$e.\text{weight} > r$}{
        Insert $e$ to $\tilde{E}$ \;
       }
      }
    }

	}

	\Return{$\tilde{G_s}=(V, \tilde{E})$}\;
	\caption{\texttt{Construct $G_s$ - Hybrid solution} }
	\label{algo:construct_graph_hybrid}
\end{algorithm}

\fakeparagraph{Privacy analysis}
We show that our hybrid solution satisfies the privacy requirements as in \defref{def:problem}. 

\begin{theorem}
  The hybrid solution in \algoref{algo:construct_graph_hybrid} satisfies all privacy requirements (R1-R3) in \defref{def:problem}. 
\end{theorem}
\begin{proof}
  Since the only released intermediate results are the noisy counts $\tilde{v_i}$, R2 is satisfied. R1 is satisfied by the post-processing and parallel composition theorem of DP, because the output graph $\tilde{G_s}$ only depends on the noisy counts. R3 is satisfied because the multi-party computation only happens at \lineref{line:hybrid_r_m}, and the confidentiality of the data is provided by the MPC protocol. 
\end{proof}

\subsection{Finding Groups}
\label{subsec:grouping}

\fakeparagraph{$(k,r)$-core}
We adopt a commonly accepted concept $(k,r)$-core \cite{DBLP:journals/pvldb/ZhangZQZL17} from graph mining literature to define the groups. The concept of $(k,r)$-core considers two important criteria of determining closely related groups (\aka community detection): \textit{engagement} and \textit{similarity}, such that the group members not only have strong social connections with each other (strong engagement), but also share high similarities (\ie their trajectories are similar). 

In our setting, as we focus on the spatial database of data owners, rather than the social graph, we require that the groups we form demonstrate strong similarities between data owners. Thus, we require each group to be a $r$-clique, meaning each group is a clique, and the edge weight (\ie the spatial similarity, introduced in \secref{subsec:sim_graph}) between the data owners is larger or equal to $r$. %It may be interesting to adopt the original $(k,r)$-core idea to define the group, and it requires privacy-preserving techniques \wrt the social graphs of data owners. We defer this to future works. 

\begin{definition}[$r$-group]
  \label{def:r_group}
  A $r$-group $g$ is a clique in the spatial similarity graph $G_s$, such that there exists an edge between any two data owners belonging to the group, and the edge weight is larger than the threshold $r$, \ie 
  for any two data owners $u_i, u_j \in g$,
  \begin{equation}
		\exists e=(u_i, u_j) \in E\; , w(e)>r
	\end{equation}
\end{definition}

\fakeparagraph{Optimizing the grouping}
The concept of $r$-group gives meaning to grouping data owners together, because they are similar. However, there could be various ways of putting data owners into different groups. For example, a data owner could belong to multiple $r$-group, and which group should we choose to put the data owner in? 

In this section, we formulate an optimization problem to connect the utility goal of the ultimate problem of this paper -- the FPRC problem, with the grouping strategy. Then, we show that this optimization is an NP-hard problem, and we propose a greedy algorithm, which is effective and efficient. 

We first define the Data Owner Grouping (DOG) problem. 

\begin{definition}[Data Owner Grouping (DOG) problem]
  \label{def:dog_grouping_problem}
  Given the inputs to the FPRC problem in \defref{def:problem}, the DOG problem asks for a way to assign data owners into $\lambda$ disjoint groups: $g_1, g_2, \ldots, g_\lambda$, where each group $g_i$ is a $r$-group (according to \defref{def:r_group}), such that the aggregated amount of noise injected by FedGroup for the FPRC problem is minimized. 
\end{definition}

Next, we show that to minimize the error (the total amount of noise injected) for the FPRC problem, it is equivalent to minimize the number of groups $\lambda$. 

\begin{lemma}
  \label{lemma:minimize_lambda}
  Minimizing the error in the FPRC problem is equivalent to minimizing $\lambda$, which is the total number of groups in the DOG problem. 
\end{lemma}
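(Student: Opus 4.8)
The plan is to make precise the error that FedGroup injects and then show that, under the natural mechanism, this error depends monotonically on $\lambda$. First I would pin down the mechanism: once the data owners are partitioned into the $r$-groups $g_1,\dots,g_\lambda$, FedGroup computes for each group its exact contribution to the range count, $Q_t=\sum_{u_i\in g_t}\sum_{l\in D_i}\mathbb{I}(d(l,q_0)<r)$, perturbs it once with Laplace noise to obtain $\tilde{Q}_t=Q_t+\eta_t$, and returns $\tilde{Q}=\sum_{t=1}^{\lambda}\tilde{Q}_t$. Since the group counts $Q_t$ are \emph{exact}, the sole source of error is the injected noise, so the total error is $\tilde{Q}-Q=\sum_{t=1}^{\lambda}\eta_t$.

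Next I would fix the scale of each $\eta_t$. Within a single group, changing one record in any data silo changes that group's count by at most $1$, so the $\ell_1$-sensitivity of each per-group query is $1$; to satisfy R1 on that group we set $\eta_t\sim\mathrm{Laplace}(1/\epsilon)$. The crucial point is that each group may use the \emph{full} budget $\epsilon$ rather than a shared fraction: because the groups form a disjoint partition of the data owners (\defref{def:dog_grouping_problem}), any single changed record lies in exactly one group and therefore affects exactly one $\tilde{Q}_t$. By the parallel-composition property of DP the overall mechanism is still $\epsilon$-DP, so no budget splitting is required and every $\eta_t$ has the same scale $1/\epsilon$, independent of $\lambda$.

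With the scale pinned down, the error is a sum of $\lambda$ i.i.d.\ $\mathrm{Laplace}(1/\epsilon)$ variables. Taking the mean-squared error (variance) as the utility measure, independence gives
\begin{equation}
  \mathrm{Var}\!\left(\tilde{Q}-Q\right)=\sum_{t=1}^{\lambda}\mathrm{Var}(\eta_t)=\lambda\cdot\frac{2}{\epsilon^2},
\end{equation}
which is strictly increasing in $\lambda$. Hence a grouping minimizes the injected error if and only if it minimizes $\lambda$, which is exactly the claim. The same monotonicity holds for the expected absolute error $\mathbb{E}\,|\sum_t\eta_t|$, so the equivalence is not an artifact of the chosen metric.

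The step I expect to be the main obstacle is the privacy-accounting argument that licenses giving each group the full budget $\epsilon$. If one instead had to split the budget as $\epsilon/\lambda$ per group, the per-group noise scale would grow with $\lambda$ and the total variance would scale like $\lambda^3/\epsilon^2$ — still monotone, but the clean linear dependence would be lost, and the reduction to ``minimize $\lambda$'' would need re-examination. Establishing the disjointness of the groups and invoking parallel composition correctly is therefore the crux; everything after it is the elementary observation that the variance of a sum of independent, identically scaled noises is linear in their count.
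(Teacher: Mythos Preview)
Your proposal is correct and follows essentially the same approach as the paper: compute the variance of the aggregated Laplace noise as $\mathrm{Var}(\tilde{Q})=\lambda\cdot 2/\epsilon^2$ and observe it is monotone in $\lambda$. Your write-up is considerably more thorough than the paper's own proof---which simply states the variance formula and concludes---in that you explicitly justify the per-group sensitivity of $1$ and the use of the full budget $\epsilon$ via parallel composition, but the underlying argument is identical.
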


\begin{proof}
  We review the error of our FedGroup solution in the FPRC problem. %As \secref{sec:fedgroup_overview} briefly shows, 
  FedGroup assigns data owners into disjoint groups, and each group injects one single instance of Laplace noise to the query result. Then, the partial noisy result is collected from each group and aggregated as the final answer. Thus, the total error of our final answer $\tilde{Q}$, which is measured by its variance, is: 
  \begin{equation}
    \text{Var}(\tilde{Q}) = \text{Var}(\sum_{i=1}^\lambda\text{Lap}(1/\epsilon)) = 
    \lambda\cdot \text{Var}(\text{Lap}(1/\epsilon)) = 2\lambda / \epsilon^2.   
    \label{eq:fedgroup_error}
  \end{equation}

  Thus, to minimize the total error in \equref{eq:fedgroup_error}, we need to minimize $\lambda$. 
\end{proof}

Now, our DOG problem becomes finding an optimal way of assigning data owners into disjoint groups, where each group is a $r$-clique, and the number of groups is minimized. In fact, the problem could be reduced from the Minimum Clique Partition (MCP) problem and the hardness result is shown as follows.

\begin{theorem}
  The DOG problem is NP-hard.
\end{theorem}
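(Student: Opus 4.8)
The plan is to prove NP-hardness by a polynomial-time reduction from the classical Minimum Clique Partition (MCP) problem, whose decision version asks, for a graph $G$ and an integer $K$, whether $V(G)$ can be partitioned into at most $K$ disjoint cliques (equivalently, whether the complement of $G$ is $K$-colorable). First I would recast DOG in purely graph-theoretic terms. By \lemref{lemma:minimize_lambda} the objective is to minimize the number of groups $\lambda$, and by the construction of $G_s$ every edge that is present already has weight exceeding $r$; hence the $r$-group condition of \defref{def:r_group} reduces to simply ``being a clique of $G_s$.'' So the decision version of DOG is exactly: can the vertices of $G_s$ be partitioned into at most $K$ disjoint cliques? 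This is literally MCP on $G_s$, so it suffices to exhibit, for an arbitrary MCP instance $G$, a DOG instance whose similarity graph is isomorphic to $G$.

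The only nontrivial step is realizability: I must construct, in polynomial time, data silos $D_1,\ldots,D_n$ and a threshold so that the cosine-similarity graph induced by \equref{eq:spatial_sim} coincides with the target graph $G=(V,E)$. I would use the following gadget. Introduce one grid dimension $p_i$ per vertex and one grid dimension $s_e$ per edge $e\in E$, so that $|T|=n+|E|$, and let the count vector of $u_i$ have a $1$ in its private dimension $p_i$, a $1$ in $s_e$ for every edge $e$ incident to $u_i$, and $0$ elsewhere (each silo $D_i$ simply places one point in each such grid). Then two silos share a nonzero coordinate if and only if the corresponding vertices are adjacent: for a non-edge $\cos(v_i,v_j)=0$, whereas for an edge $ij$ the only common coordinate is $s_{ij}$, giving $\cos(v_i,v_j)=1/\sqrt{(1+d_i)(1+d_j)}\ge 1/n$, where $d_i$ denotes the degree of $u_i$ and each $d_i\le n-1$. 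Choosing the similarity threshold $r:=1/(2n)$ therefore inserts exactly the edges of $G$ and no others, so $G_s\cong G$.

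Correctness of the reduction is then immediate: a partition of the DOG instance into $\lambda$ disjoint $r$-groups is precisely a partition of $G$ into $\lambda$ cliques, so the minimum feasible $\lambda$ equals the minimum clique-partition number of $G$, and the DOG instance admits at most $K$ groups if and only if $G$ admits a partition into at most $K$ cliques. Since the gadget has size $O(n+|E|)$ and is computable in polynomial time, and MCP is NP-hard, DOG is NP-hard.

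I expect the realizability gadget to be the main obstacle, and the delicate point is the choice of threshold. One cannot hope to realize an arbitrary graph as a cosine-threshold graph at a \emph{high} threshold, since a single vertex cannot be nearly parallel to many mutually non-parallel neighbours; the key trick is that we are free to choose $r$. By driving every non-edge similarity to exactly $0$ while retaining a uniform positive lower bound $1/n$ on all edge similarities, the small threshold $r=1/(2n)$ cleanly separates the two cases. If one instead prefers to treat $G_s$ itself as the input to the grouping subproblem, this gadget is unnecessary and the reduction from MCP becomes verbatim; I would nonetheless present the gadget so that the hardness statement holds for the raw FPRC inputs (the data silos), matching \defref{def:dog_grouping_problem}.
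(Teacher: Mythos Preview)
Your proof is correct and uses the same reduction from Minimum Clique Partition (MCP) that the paper uses. The paper's argument simply observes that a DOG solution with variance $2\lambda/\epsilon^2$ corresponds, via \lemref{lemma:minimize_lambda}, to a partition of $G_s$ into $\lambda$ cliques and vice versa, and then invokes the NP-hardness of MCP directly on $G_s$.

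Where you go further than the paper is the realizability step. The paper implicitly treats the similarity graph $G_s$ as the input to the grouping subproblem and therefore makes the reduction verbatim, without ever explaining how an arbitrary MCP instance $G$ can arise as the similarity graph of some collection of data silos. Your gadget (one private coordinate per vertex, one shared coordinate per edge, threshold $r=1/(2n)$) fills exactly this gap and is correct: non-edges get cosine $0$, edges get cosine at least $1/n$, and the construction is polynomial in $|V|+|E|$. This makes your hardness statement apply to DOG as literally stated in \defref{def:dog_grouping_problem}, where the inputs are those of the FPRC problem (the data silos themselves), rather than only to the downstream graph problem. In short, both proofs share the same skeleton, but yours is the more careful of the two; the paper's version is the ``$G_s$ as input'' shortcut you already anticipated in your final paragraph.
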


\begin{proof}

  We reduce the NP-hard problem, Minimum Clique Partition (MCP), to the DOG problem in the appendix. Thus, DOG shares the same hardness. 
\end{proof}

\fakeparagraph{Greedy algorithm} Since the DOG problem is an NP-hard problem, we devise an efficient greedy solution (\algoref{algo:greedy_grouping}) to obtain effective grouping and provide a bounded noise scale in the returned solution. The solution is inspired by the greedy solution to the Minimum Graph Coloring problem, which is shown as an equivalent problem to the MCP problem.

\begin{algorithm}[t]
	\DontPrintSemicolon
	\KwIn{$\tilde{G_s}=(V, \tilde{E})$. }
	\KwOut{$g_1, g_2, \ldots, g_\lambda$.}
    $G_c := $ ComplementGraph($\tilde{G_s}$)  \;  
    assigned $:= $ InitializeIntegerArray(size=$|V|$, initValue=0) \;      
    $\lambda :=$ 1 ;
    $\text{assigned}[u_1] = 1$ ;
    $g_1$.insert($u_1$) \;
    $u := $ NextUnassignedVertex($V$, assigned) \;
    %$u := $ UnassignedVertex($V$, assigned) \;
    \While{$u$ exists } {
      
      $gid :=$ NextGroupId($u$.neighbors(), assigned) \label{line:greedy_nextgroupid}\;
      $\text{assigned}[u] = gid$ ; 
      $g_{gid}$.insert($u$) \;
      \If{$gid > \lambda$} {
        $\lambda = gid$ \;
      }
      $u := $ NextUnassignedVertex($V$, assigned) \;
    }

	\Return{$g_1, g_2, \ldots, g_\lambda$}\;
	\caption{\texttt{Greedy Find Groups} }
	\label{algo:greedy_grouping}
\end{algorithm}

\fakeparagraph{Performance guarantee} In Theorem \ref{the:bounded-number}, we show the greedy solution has a bounded number of groups and offers a bounded noise for the FPRC problem. 

\begin{theorem}\label{the:bounded-number}
  If $d$ is the largest degree of a node in the complement graph $G_c$, then \algoref{algo:greedy_grouping} returns at most $d+1$ groups. 
\end{theorem}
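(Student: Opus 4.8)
The plan is to recognize that \algoref{algo:greedy_grouping} is exactly the classical greedy (first-fit) graph coloring procedure run on the complement graph $G_c$, with ``groups'' playing the role of ``colors''. Two data owners may lie in the same $r$-group precisely when they are adjacent in $G_s$, i.e.\ \emph{non}-adjacent in $G_c$; hence a valid assignment into $r$-groups is the same as a proper coloring of $G_c$, in which each color class is an independent set of $G_c$. The function NextGroupId$(u.\text{neighbors}(), \text{assigned})$ returns the smallest positive integer that does not already appear as an assigned group ID among the neighbors of $u$ in $G_c$, which is the defining rule of first-fit coloring. Thus the number of groups $\lambda$ equals the largest color used by first-fit, and the target $d+1$ is the standard greedy-coloring bound $\Delta(G_c)+1$.

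First I would fix the key invariant maintained by the loop: whenever a vertex $u$ is assigned, it receives $gid = $ NextGroupId$(\cdot)$, the least positive integer not used by any already-assigned neighbor of $u$. I would then carry out the main (pigeonhole) step: the vertex $u$ has at most $d$ neighbors in $G_c$ in total (by the definition of $d$ as the maximum degree), so at most $d$ of them can already be assigned, meaning the set of forbidden group IDs for $u$ has size at most $d$. Consequently the smallest positive integer outside this forbidden set is at most $d+1$, so $gid \le d+1$ for every vertex processed. Since $\lambda$ is updated only to track the maximum $gid$ ever assigned (\lineref{line:greedy_nextgroupid} and the subsequent $\lambda$ update), we obtain $\lambda \le d+1$, completing the argument by a trivial induction over the order in which the vertices are processed.

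The bound itself is a one-line pigeonhole count, so I do not expect a serious obstacle there. The point that deserves the most care is the modeling and correctness bookkeeping rather than the counting: I must confirm that NextGroupId indeed implements first-fit over the \emph{already-assigned} neighbors (not all neighbors), and that each resulting group is a legitimate $r$-group. The latter follows because the vertices sharing a common group form an independent set in $G_c$, hence a clique in $G_s$ whose edges all exceed the threshold $r$, matching \defref{def:r_group}; I would state this explicitly so that the returned partition is simultaneously valid and bounded in size by $d+1$.
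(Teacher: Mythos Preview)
Your proposal is correct and follows essentially the same pigeonhole argument as the paper: each vertex has at most $d$ neighbors in $G_c$, so at most $d$ group ids are forbidden when it is processed, and hence the smallest available id is at most $d+1$. The additional framing as first-fit coloring and the correctness check that each color class is an $r$-clique are fine elaborations; the paper states the latter as a separate theorem rather than folding it into this bound.
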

\begin{proof}
  We focus on a data owner $u$ with degree $d$ (the maximum degree in the graph) as \algoref{algo:greedy_grouping} proceeds. There are at most $d$ neighbors of $u$ that we should avoid using the same group id. Since we are using the lowest-numbered group ids that have not been used by any of the neighbors, among group id $1, 2, \ldots, d+1$, there is at least one id that could be used by $u$ (\eg the first $d$ group ids are used, and we now use $d+1$). Thus, we conclude that at most $d+1$ groups are returned by \algoref{algo:greedy_grouping}. 
\end{proof}

\fakeparagraph{Time complexity} Each time when \algoref{algo:greedy_grouping} processes a vertex, it iterates over the neighbors of the vertex to find the lowest-numbered group id that is available. Thus, overall, it takes $O(|E|)$ time to run the algorithm. The graph complement step at the initialization also takes $O(|E|)$ time.

\subsection{Partial Answers and Aggregation}
\label{subsec:partial_aggregation}
\secref{subsec:grouping} describes the most critical step of FedGroup, which is finding the groups for data owners. After the grouping, data owners belonging to the same group could avoid injecting separate instances of Laplace to their individual query answers. Instead, each group aggregates the partial answers from data owners and injects one instance of Laplace noise to the partial aggregated answer. 

As the last step, the service provider $S$ collects the partial noisy answers from all groups $g_1, \ldots, g_\lambda$. The noisy answers are aggregated (taking a summation) and then returned to the end user. 

\subsection{Extension to Range Aggregation Queries}
In this paper, we focus on the most fundamental primitives for LBS systems, the range counting queries. However, the techniques we present could be easily extended to other aggregation queries, including range \texttt{SUM()} and \texttt{AVG()}. 

To extend FedGroup to \texttt{SUM()}, the important extension is the scale of noise injected for each group (the sensitivity). Since now each data record in a spatial database does not only affect the final query result by +/-1, we need to inject the worst-case scale to satisfy differential privacy. To avoid injecting unbounded Laplace noise, a truncation could be performed, \ie we truncate all data records to make a certain attribute smaller or equal to a truncation parameter $\theta$. 

The extension to \texttt{AVG()} is straightforward as the average could be calculated by \texttt{SUM()}/\texttt{COUNT()}.

\section{Experimental Study}
\label{sec:experiment}

In this section, we first introduce the experimental setup in Sec.~\ref{subsec:experimentSetup} and present the detailed experimental results in Sec.~\ref{subsec:experimentResults}. 

\subsection{Experimental Setup}
\label{subsec:experimentSetup}

\fakeparagraph{Datasets} We use a real-world geo-social network datasets Gowalla \cite{DBLP:conf/kdd/ChoML11} and a randomly generated synthetic dataset in our experiments. 

\fakeparagraph{Baselines} We compare our proposed FedGroup (short as \textbf{FG}) method with the DP baseline %(\secref{subsubsec:dp_baseline}
(short as \textbf{DP}) and the MPC baseline (short as \textbf{MPC}).

For FedGroup, we also implement a variant based on cliques sizes (a heuristic based solution). It lists the cliques in the graph in descending sizes, and keeps adding the largest clique into a new group until all nodes are assigned with a group. We name this variant as FedGroup-Exhaustive (short as \textbf{FG-Ex}). Note that finding the maximum clique in a graph is in general an NP-hard problem, so we expect that this variant is only tractable on small graphs. 

\fakeparagraph{Metrics}
\label{subsec:experimentsMetrics}
We focus on the following end-to-end metrics:
\begin{itemize}[leftmargin=*]
    \item Mean relative error (MRE): the ratio of error as compared to the true result, \ie $|$returned result - true result$|$/true result, averaged over repeated queries. 
    \item Mean absolute error (MAE): the absolute error of the returned result as compared to the true results, \ie $|$returned result - true result$|$, averaged over repeated queries. 
    \item Query evaluation time: the time to execute the spatial count queries in seconds, averaged over repeated queries. 
\end{itemize}

\fakeparagraph{Control variables}

\begin{itemize} [leftmargin=*]
  \item The number of data owners: $m \in [ \textbf{500}, 1K, 2K, 3K]$. For the scalability test: %, the synthetic dataset contains data owners in the range of 
  $m \in [100K, 250K, 500K, 1M]$. 
  \item Privacy budget:  $\epsilon \in [0.2, \textbf{0.3}, 0.4, 0.5]$. The privacy budget considered is relatively small, because it is a personal budget for each data owner, and repeated queries may result linear composition of the budget. 
\end{itemize}

\subsection{Experimental Results}
\label{subsec:experimentResults}

Here, we present the results on query accuracy to first verify the motivation of our work -- our proposed solution FedGroup offers superior query accuracy (up to 50\% improvement) as compared to the DP baseline. 
We also compare the running time of query execution to show that the MPC baseline is not tractable even in moderate data sizes, while our solution FedGroup offers excellent scalability. Overall, we have a better utility/efficiency trade-off than the MPC baseline.

 \fakeparagraph{Query accuracy} We test the query accuracy (using MRE and MAE) of different methods. The results are shown in \figref{fig:exp_MRE_MAE_eps} and \figref{fig:exp_MRE_MAE_m}. Over different privacy budget, FedGroup consistently outperforms the DP baseline by a large margin (30-50\% improvement). Across different input sizes (the number of federations $m$), FedGroup also provides a significant accuracy improvement.

\begin{figure}[t!]
    \centering
       \begin{subfigure}[b]{0.23\textwidth}
           \centering
           \includegraphics[width=\textwidth]{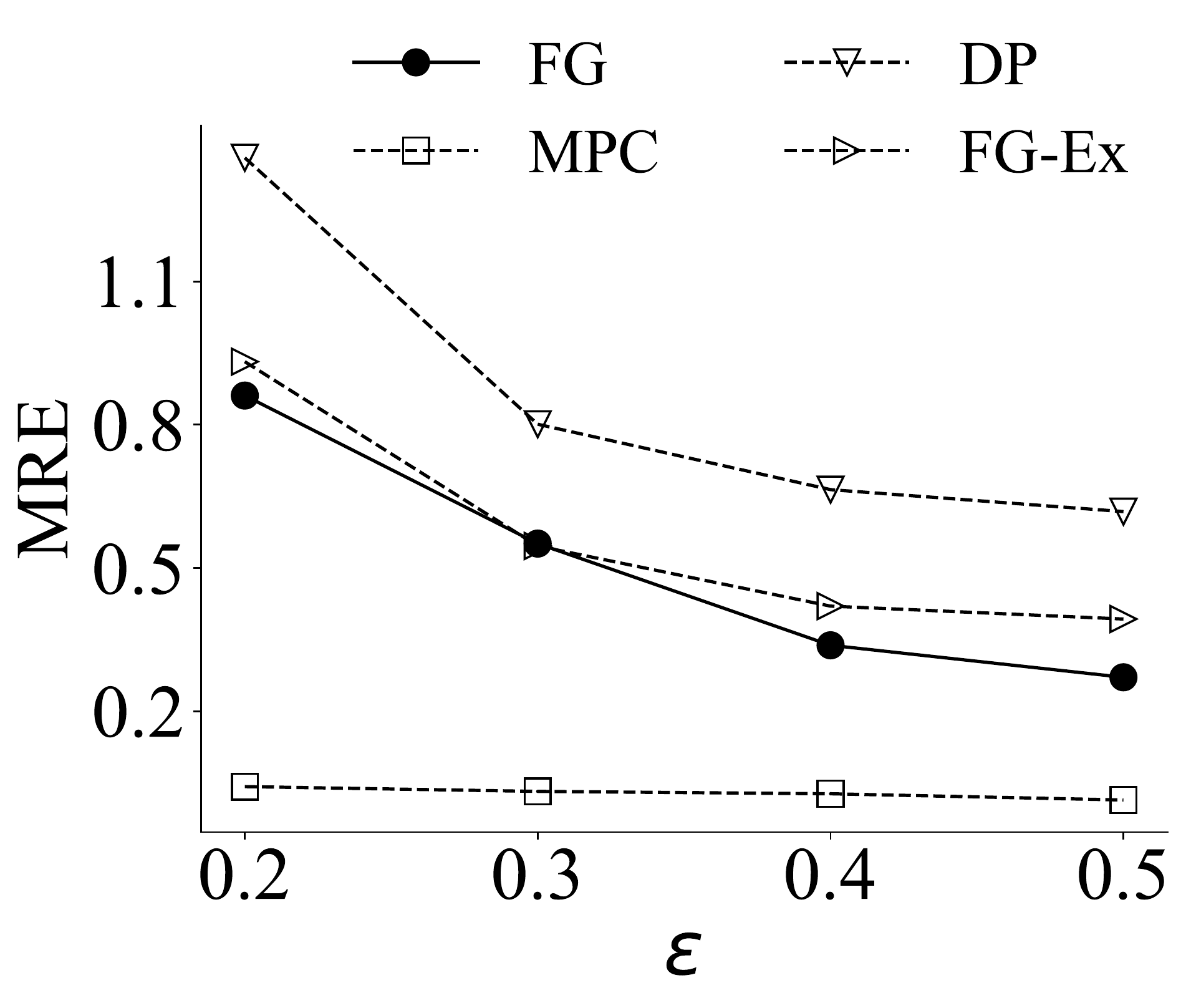}
           \caption{MRE vs. $\epsilon$.}
           \label{subfig:exp_MRE_eps}
       \end{subfigure}
       \hfill
        \begin{subfigure}[b]{0.23\textwidth}
           \centering
           \includegraphics[width=\textwidth]{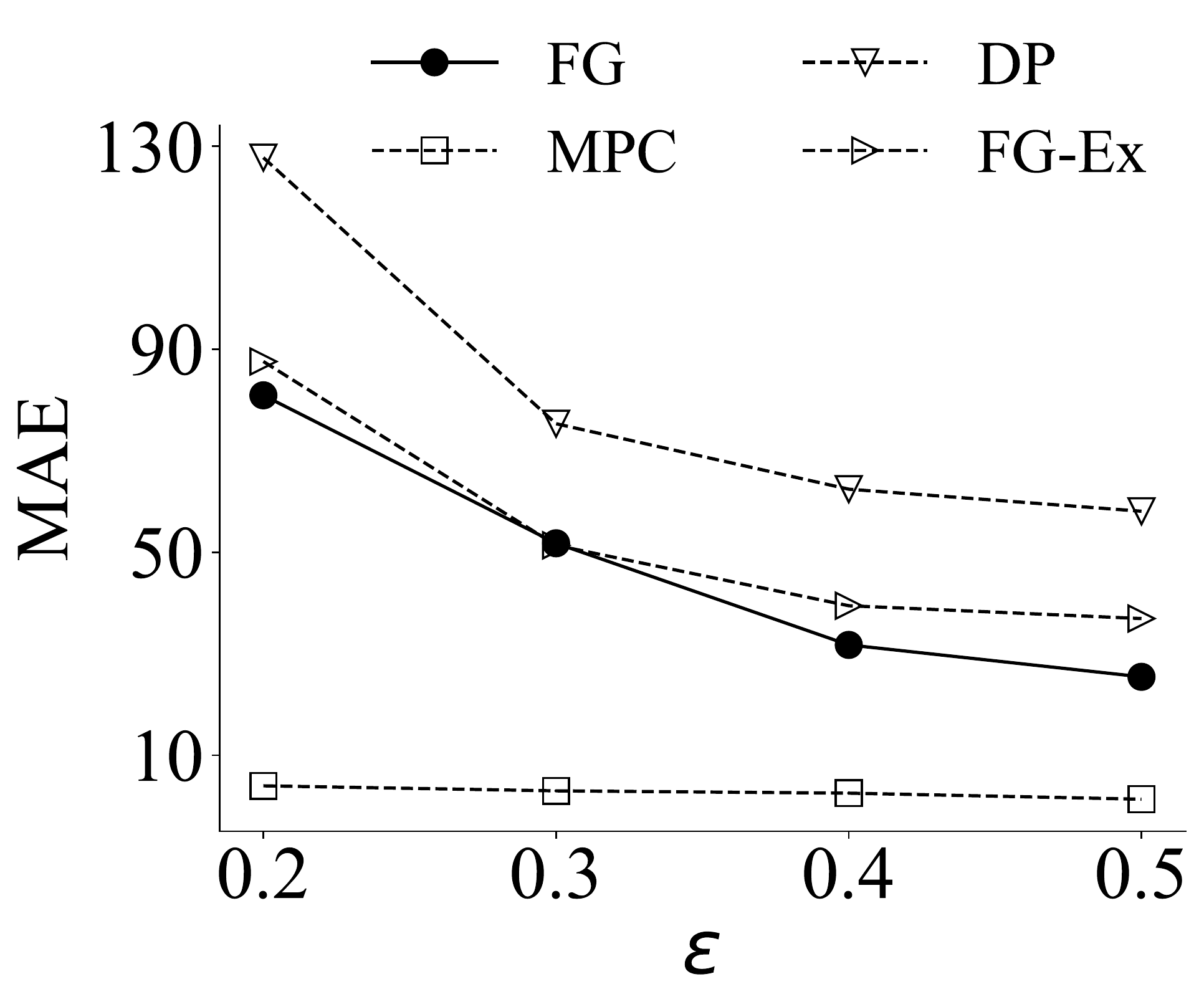}
           \caption{MAE vs. $\epsilon$.}
           \label{subfig:exp_MAE_eps}
       \end{subfigure}
       \hfill
        \begin{subfigure}[b]{0.23\textwidth}
           \centering
           \includegraphics[width=\textwidth]{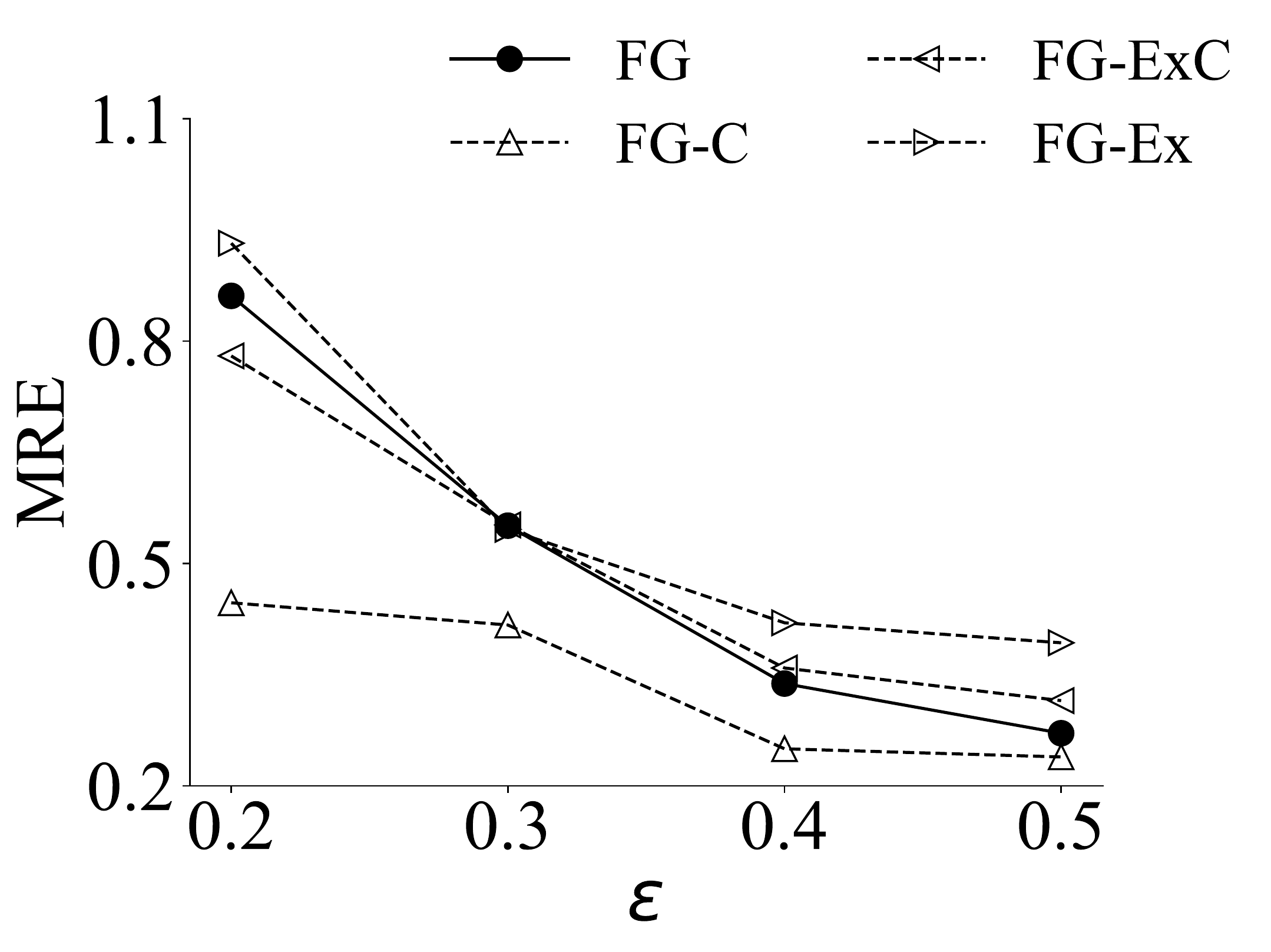}
           \caption{Noisy vs. Clear.}
           \label{subfig:exp_MRE_eps_C_vs_N}
       \end{subfigure}
       \hfill
        \begin{subfigure}[b]{0.23\textwidth}
           \centering
           \includegraphics[width=\textwidth]{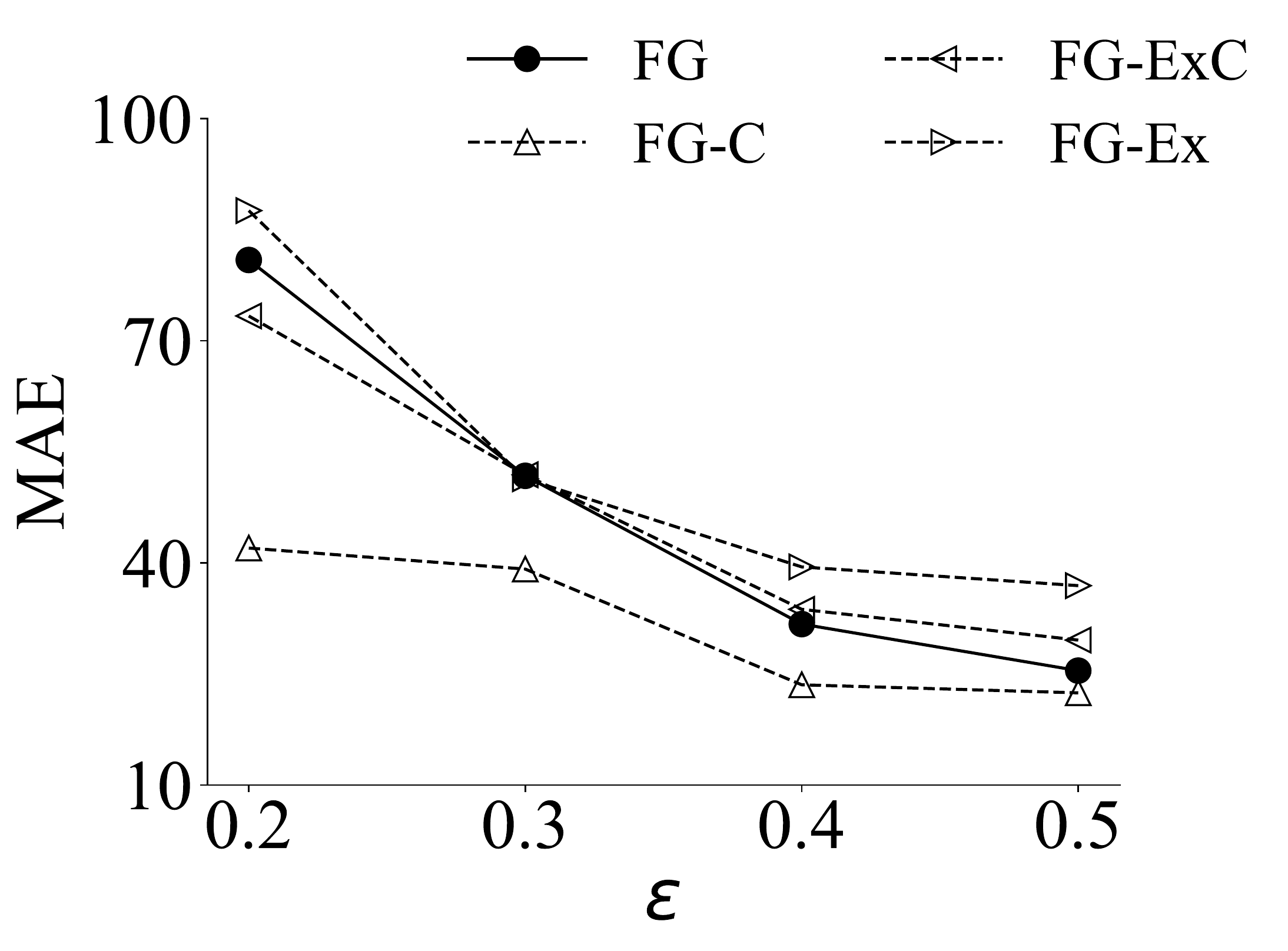}
           \caption{Noisy vs. Clear.}
           \label{subfig:exp_MAE_eps_C_vs_N}
       \end{subfigure}
    \caption{\small End-to-end query accuracy of different methods vs. the privacy budget $\epsilon$. (Gowalla dataset, $ m=500.$) }\label{fig:exp_MRE_MAE_eps}
   \end{figure}

   \begin{figure}[t]
    \centering
       \begin{subfigure}[b]{0.23\textwidth}
           \centering
           \includegraphics[width=\textwidth]{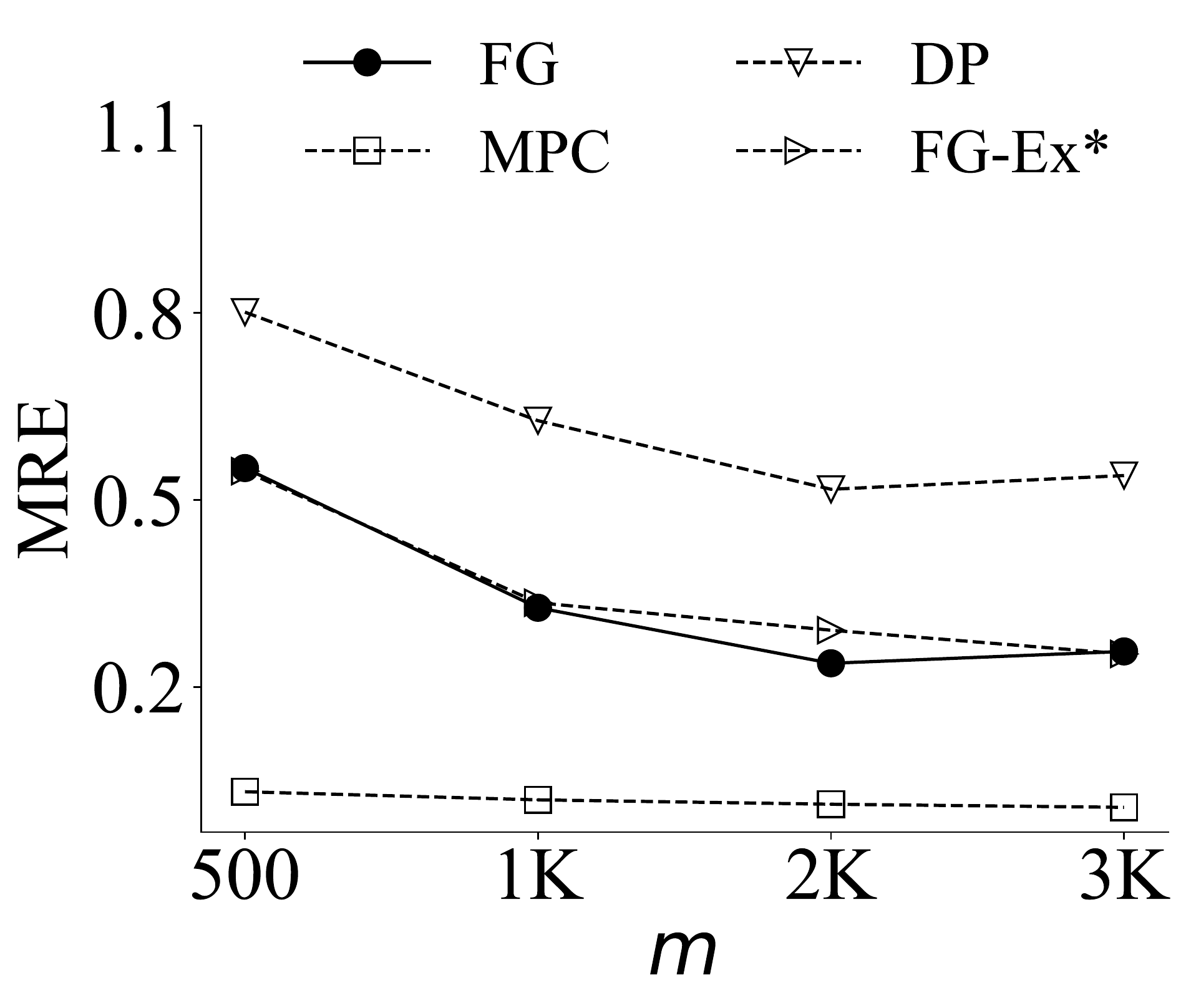}
           \caption{MRE vs. $m$.}
           \label{subfig:exp_MRE_m}
       \end{subfigure}
       \hfill
        \begin{subfigure}[b]{0.23\textwidth}
           \centering
           \includegraphics[width=\textwidth]{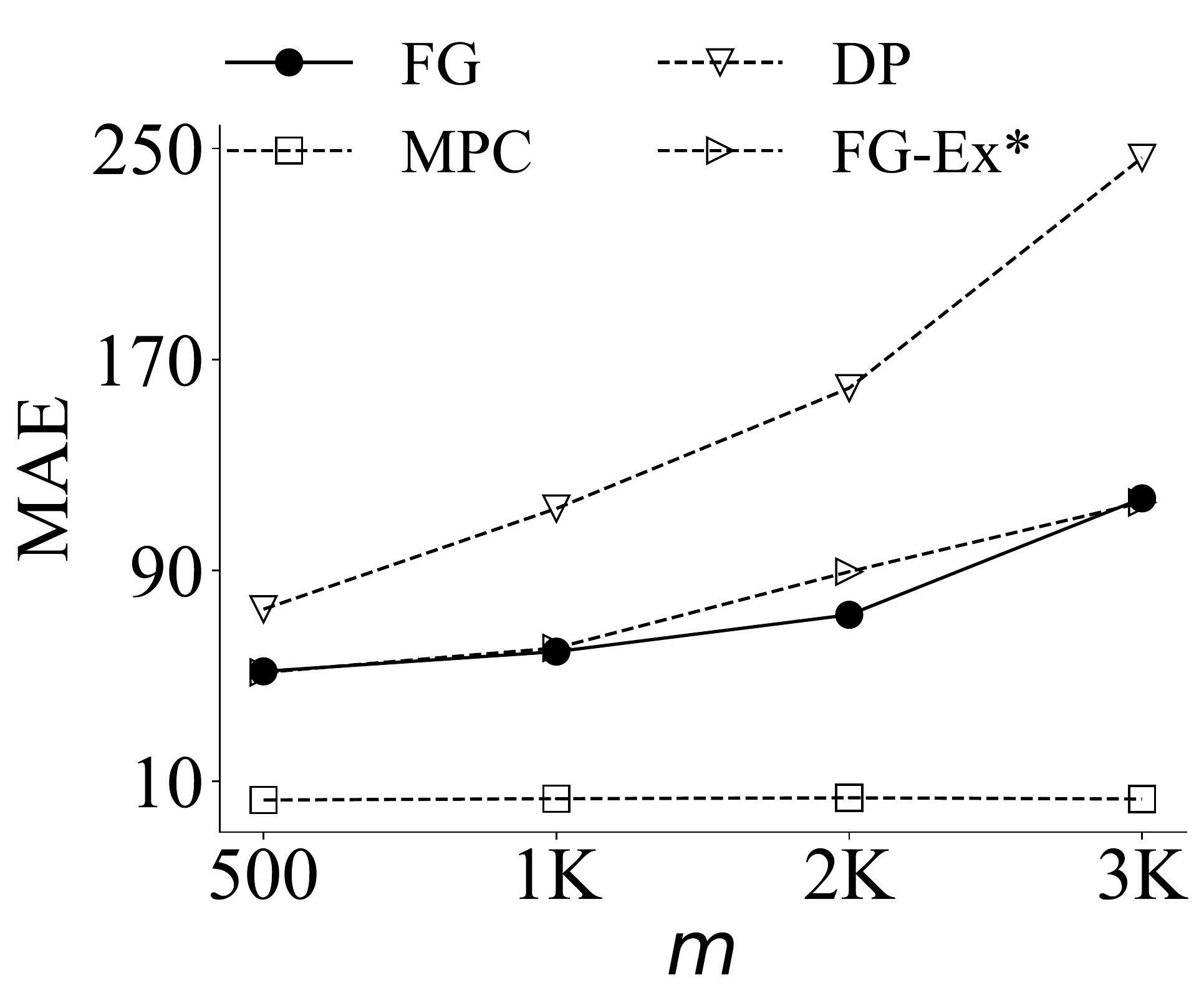}
           \caption{MAE vs. $m$.}
           \label{subfig:exp_MAE_m}
       \end{subfigure}
       \hfill
        \begin{subfigure}[b]{0.23\textwidth}
           \centering
           \includegraphics[width=\textwidth]{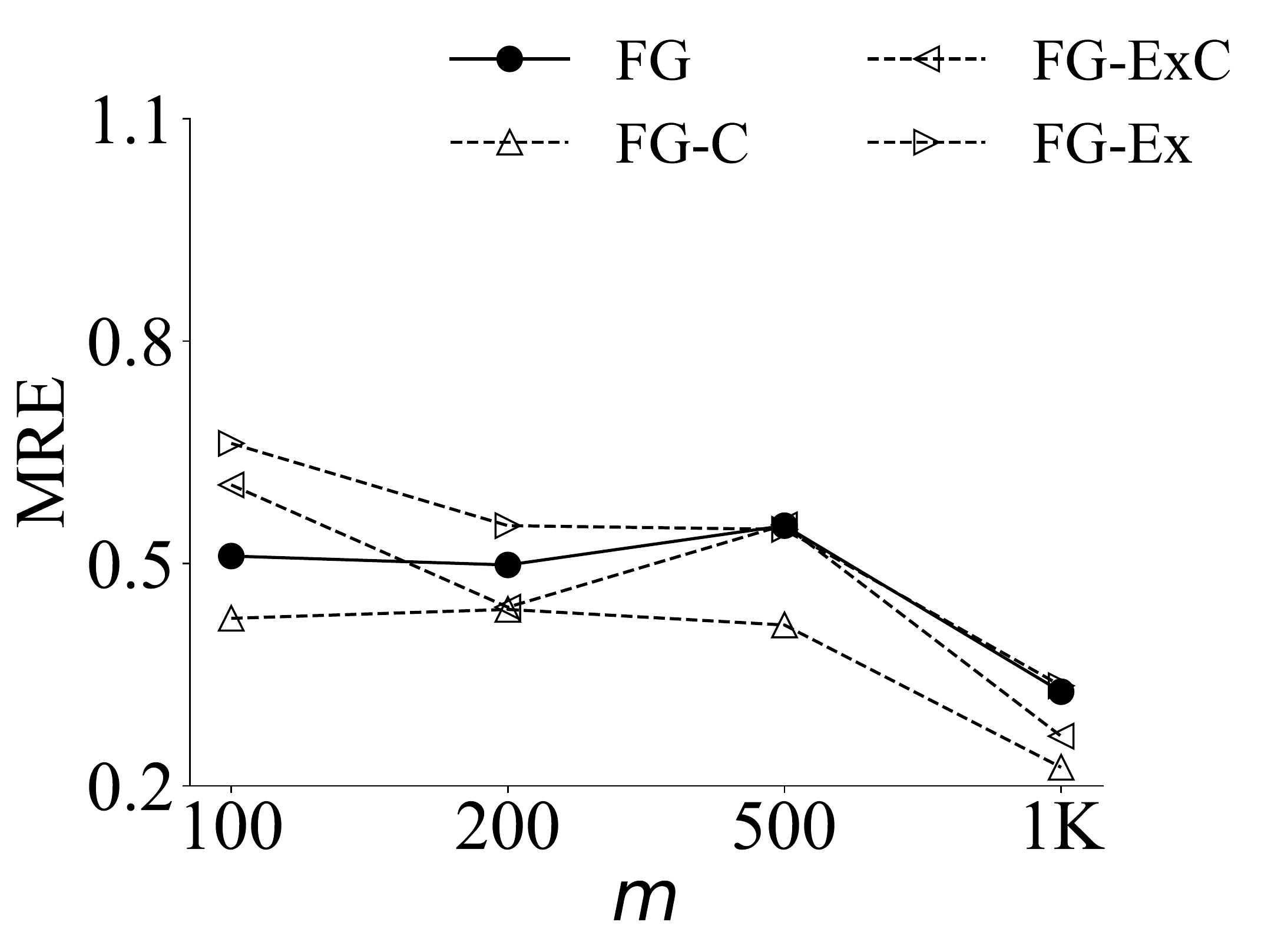}
           \caption{Noisy vs. Clear.}
           \label{subfig:exp_MRE_m_C_vs_N}
       \end{subfigure}
       \hfill
        \begin{subfigure}[b]{0.23\textwidth}
           \centering
           \includegraphics[width=\textwidth]{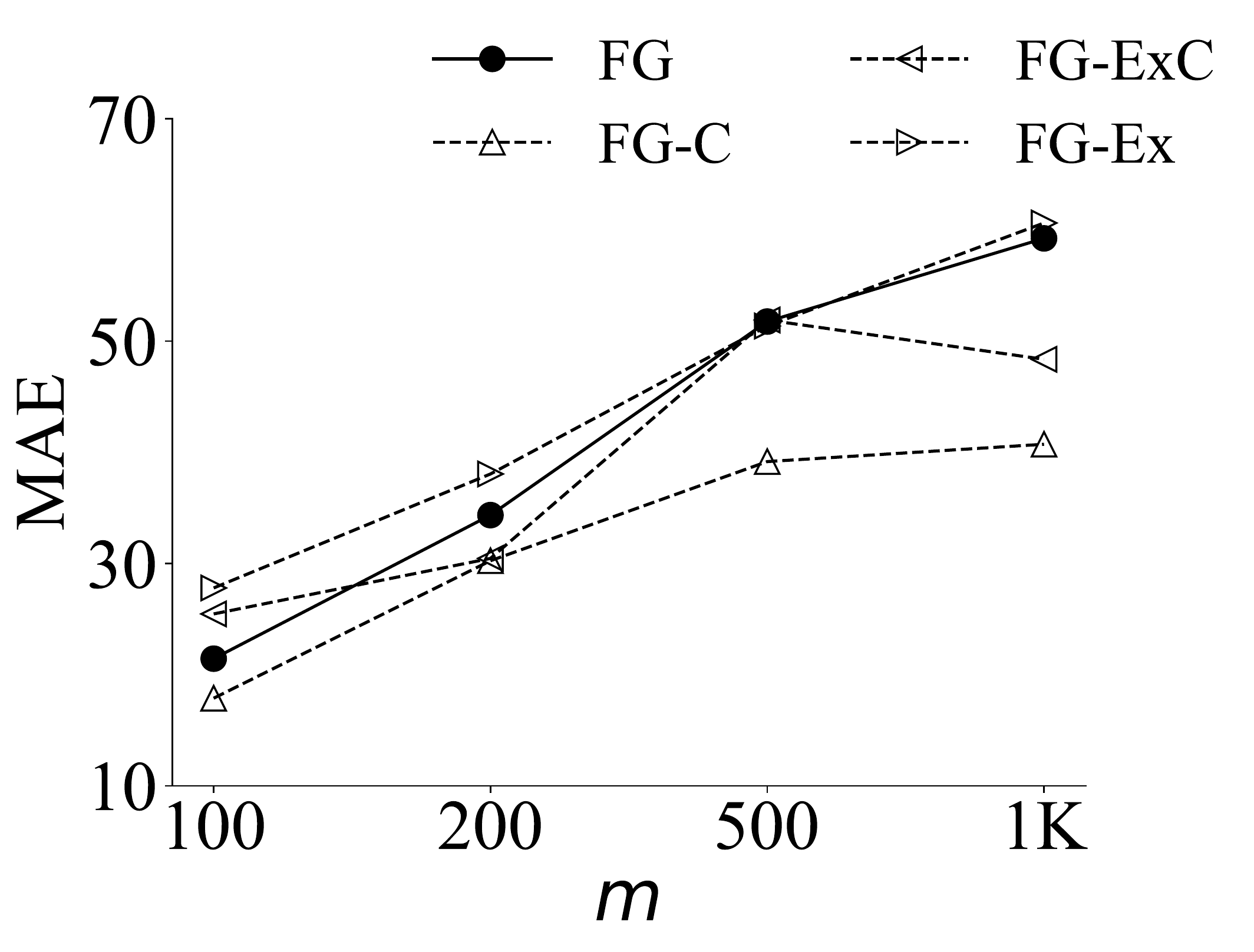}
           \caption{Noisy vs. Clear.}  
           \label{subfig:exp_MAE_m_C_vs_N}
       \end{subfigure}
    \caption{\small End-to-end query accuracy of different methods vs. the number of federations $m$. (Gowalla dataset, $ \epsilon=0.3$.) }\label{fig:exp_MRE_MAE_m}

   \end{figure}

\begin{figure}[t]
    \centering
       \begin{subfigure}[b]{0.23\textwidth}
           \centering
           \includegraphics[width=\textwidth]{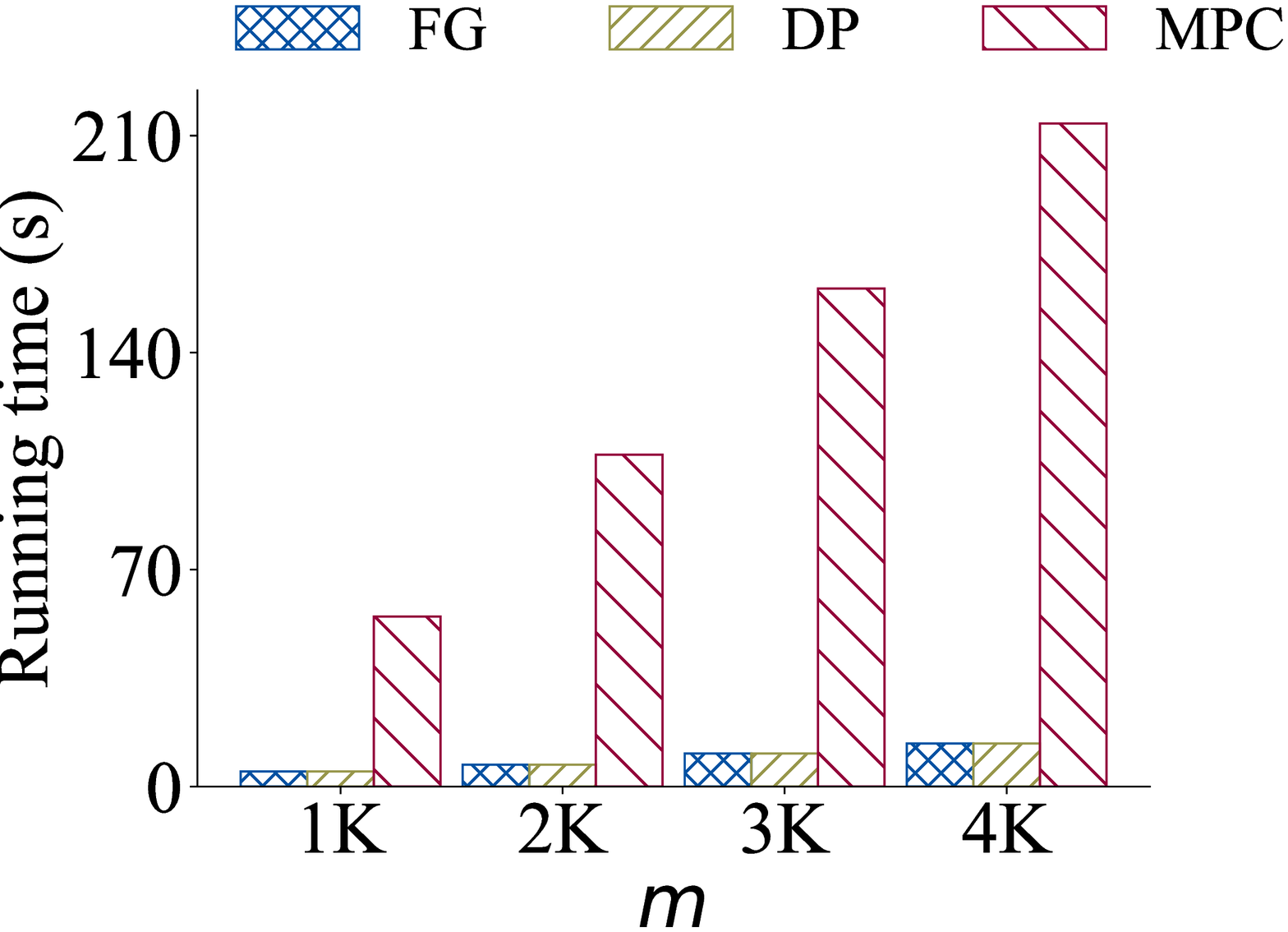}
           \caption{Varying $m$.}
           \label{subfig:exp_time_m}
       \end{subfigure}
       \hfill
        \begin{subfigure}[b]{0.23\textwidth}
           \centering
           \includegraphics[width=\textwidth]{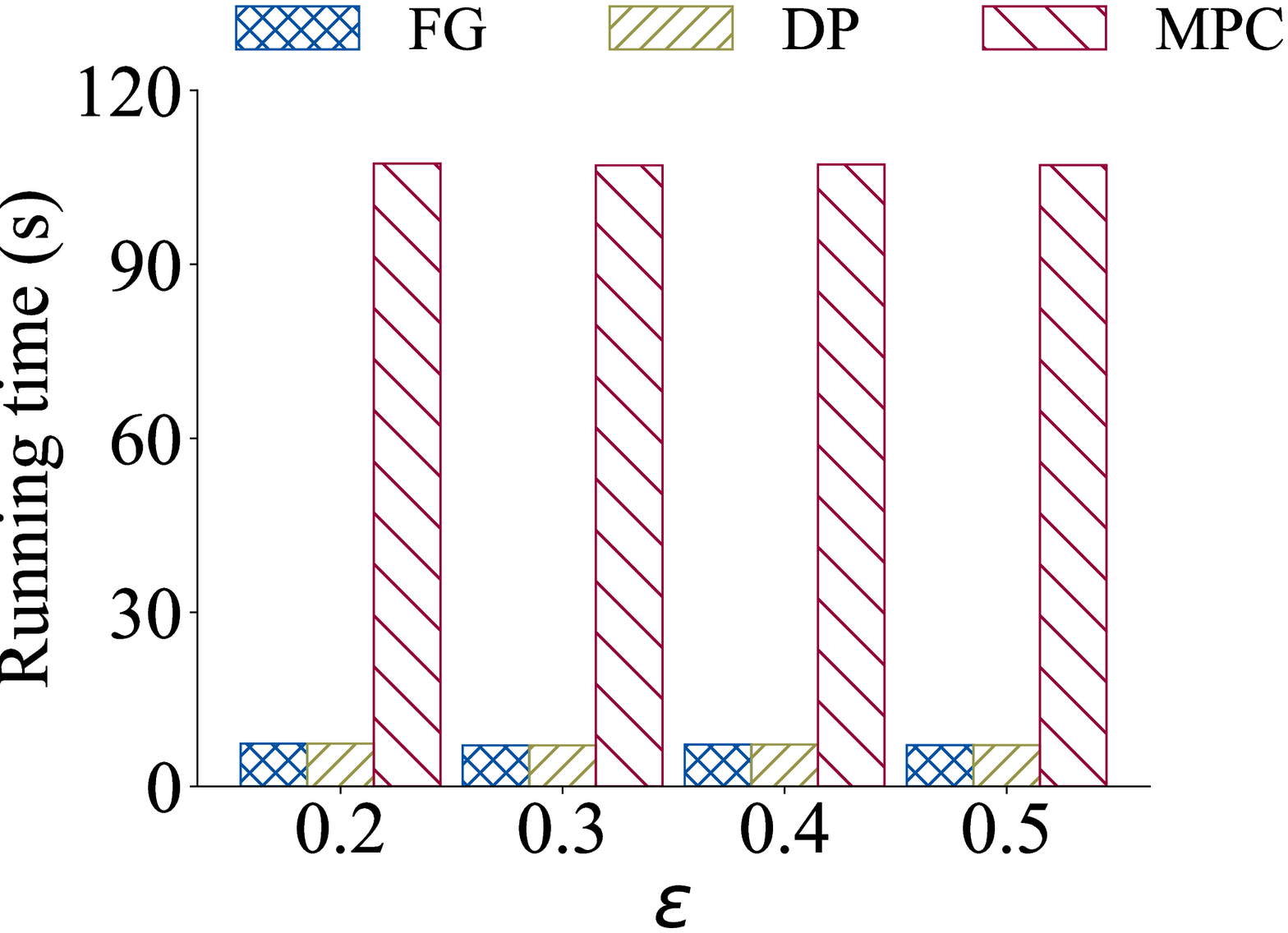}
           \caption{Varying $\epsilon$.}
           \label{subfig:exp_time_eps}
       \end{subfigure}
       \hfill
       \begin{subfigure}[b]{0.23\textwidth}
        \centering
        \includegraphics[width=\textwidth]{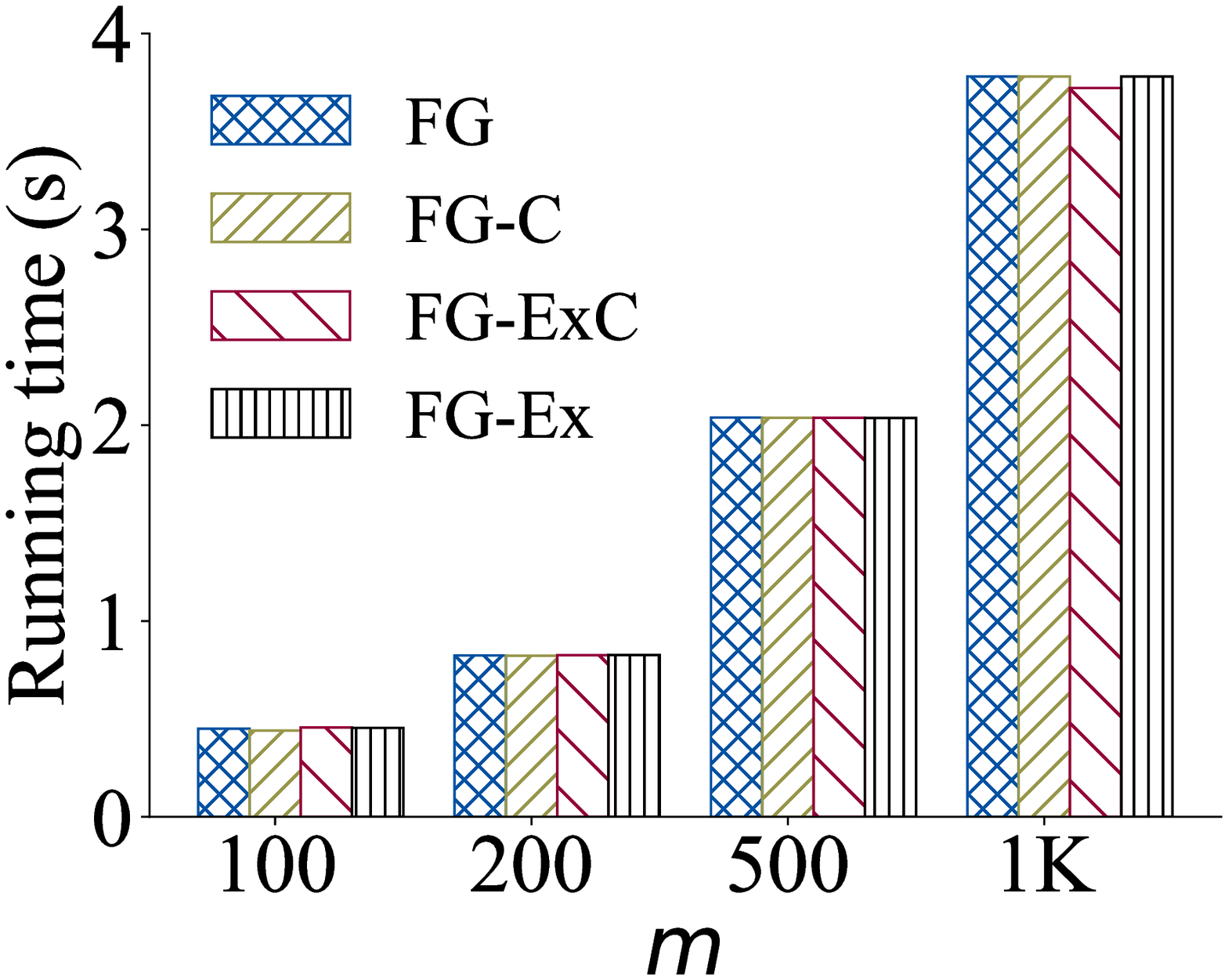}
        \caption{Noisy vs. Clear.}
        \label{subfig:exp_time_m_N_vs_C}
        \end{subfigure}
       \hfill
       \begin{subfigure}[b]{0.23\textwidth}
        \centering
        \includegraphics[width=\textwidth]{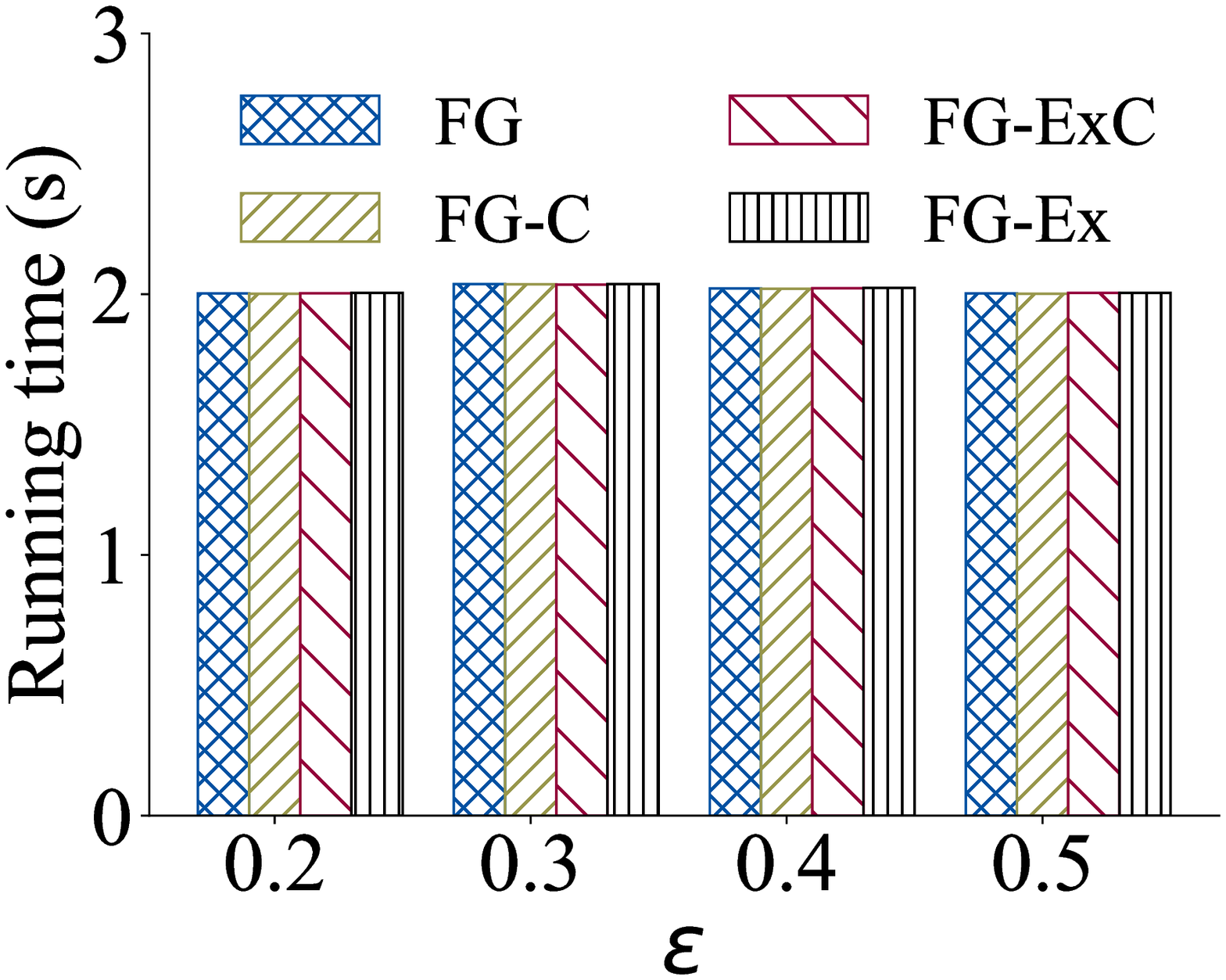}
        \caption{Noisy vs. Clear.}
        \label{subfig:exp_time_eps_N_vs_C}
    \end{subfigure}

    \caption{\small Efficiency (running time) for different control variables. (Gowalla dataset) }\label{fig:exp_time}\vspace{0ex}
   \end{figure}

 \fakeparagraph{Varying $\epsilon$} \figref{fig:exp_MRE_MAE_eps} shows the query accuracy results when we test the compared methods over different privacy budget $\epsilon$ for a spatial federation of $m=500$ data owners using the Gowalla dataset. FedGroup (FG) offers a clear improvement over the DP baseline. Specifically, as shown in \figref{subfig:exp_MRE_eps}, the MRE for the DP baseline at $\epsilon=0.3$ is 0.801 (80.1\%), while FG reduces the error to 0.551 (55.1\%), which provides a $(0.801-0.551)/0.801 * 100\%=31.2\%$ improvement. 
 
 When $\epsilon$ gets larger, indicating a more relaxed privacy requirement, the query accuracy generally improves for all methods, reflected by a smaller MRE and MAE. At $\epsilon=0.5$ (as shown in \figref{subfig:exp_MRE_eps} and \figref{subfig:exp_MAE_eps}), the MRE and MAE for the DP baseline is 0.618 and 58.069, respectively. In comparison, the MRE and MAE of FG improves to 0.271 and 25.448, respectively. The query accuracy improvement of FG over DP is about 56.1\%. The accuracy improvement is more significant than the one of $\epsilon=0.3$ (31.2\%), and the explanation is that when $\epsilon$ gets larger, more accurate spatial similarity graphs are constructed (leading to more correctly identified edges), and FG could assign data owners into smaller number of groups, which eventually leads to a smaller end-to-end query error. 

 It is worth noting that the computationally heavy MPC baseline offers the best query accuracy (due to the $O(1)$ noise) across all $\epsilon$. Despite its impracticability in real-world data sizes, %(see the efficiency results in \secref{subsec:exp_efficiency})
 we include its results here for completeness.

 \begin{figure}[t]
    \centering
       \begin{subfigure}[b]{0.23\textwidth}
        \centering
        \includegraphics[width=\textwidth]{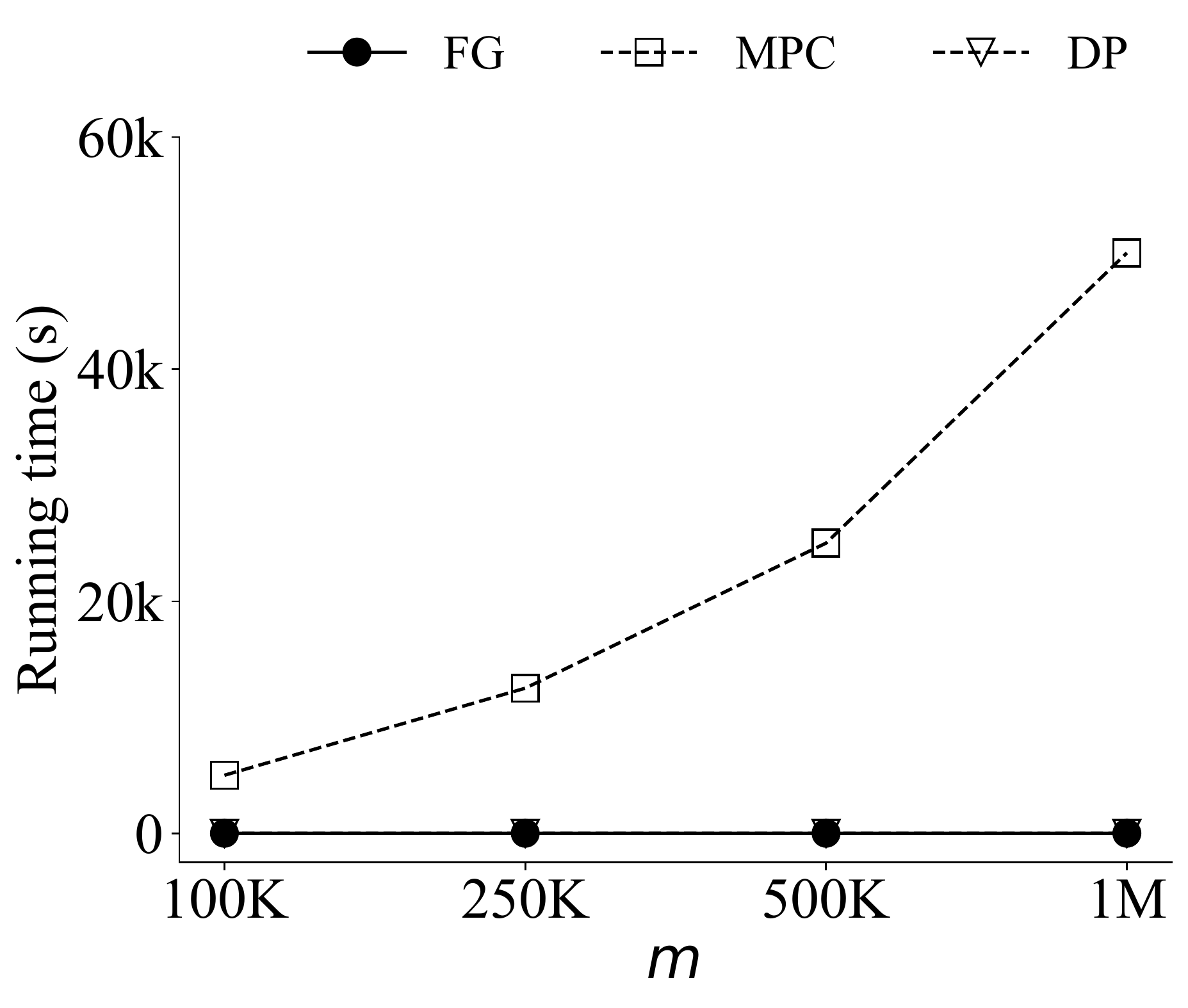}
        \caption{Time vs. $m$.}
        \label{subfig:exp_time_m_scale}
        \end{subfigure}
       \hfill
       \begin{subfigure}[b]{0.23\textwidth}
        \centering
        \includegraphics[width=\textwidth]{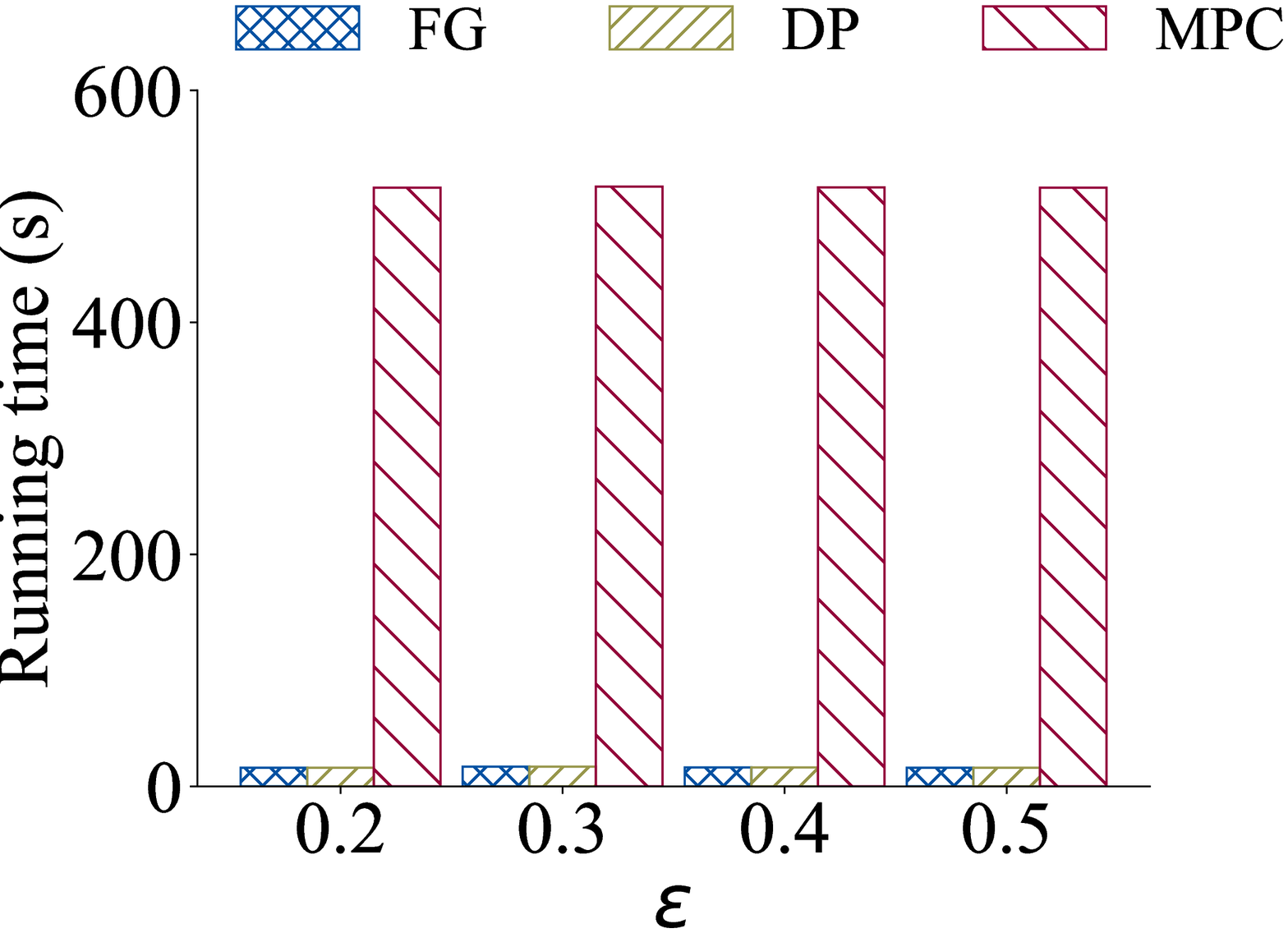}
        \caption{Time vs. $\epsilon$.}
        \label{subfig:exp_time_scale_eps}
    \end{subfigure}
    \begin{subfigure}[b]{0.23\textwidth}
      \centering
      \includegraphics[width=\textwidth]{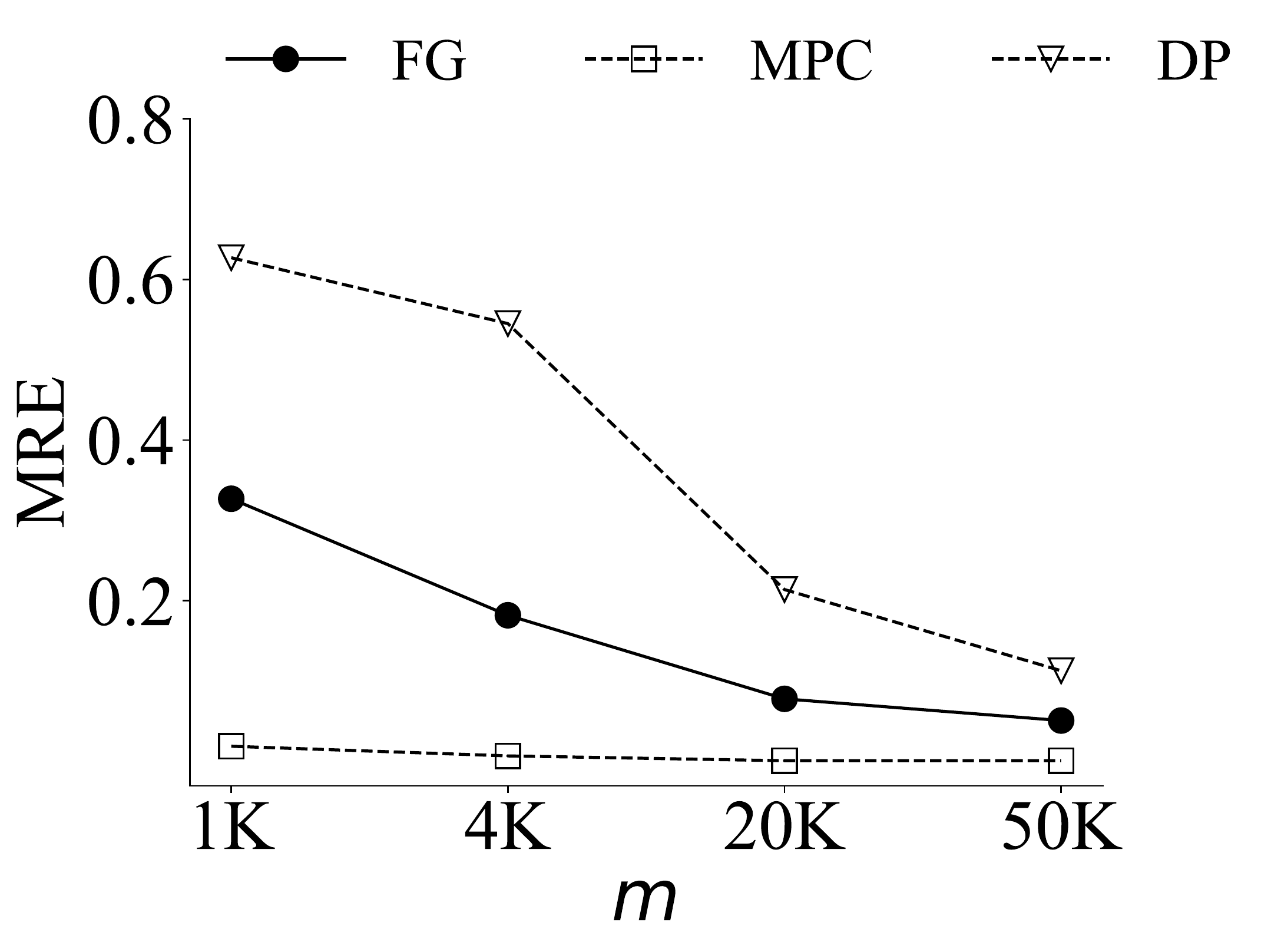}
      \caption{MRE vs. $m$.}
      \label{subfig:exp_scale_MRE}
  \end{subfigure}
  \hfill
   \begin{subfigure}[b]{0.23\textwidth}
      \centering
      \includegraphics[width=\textwidth]{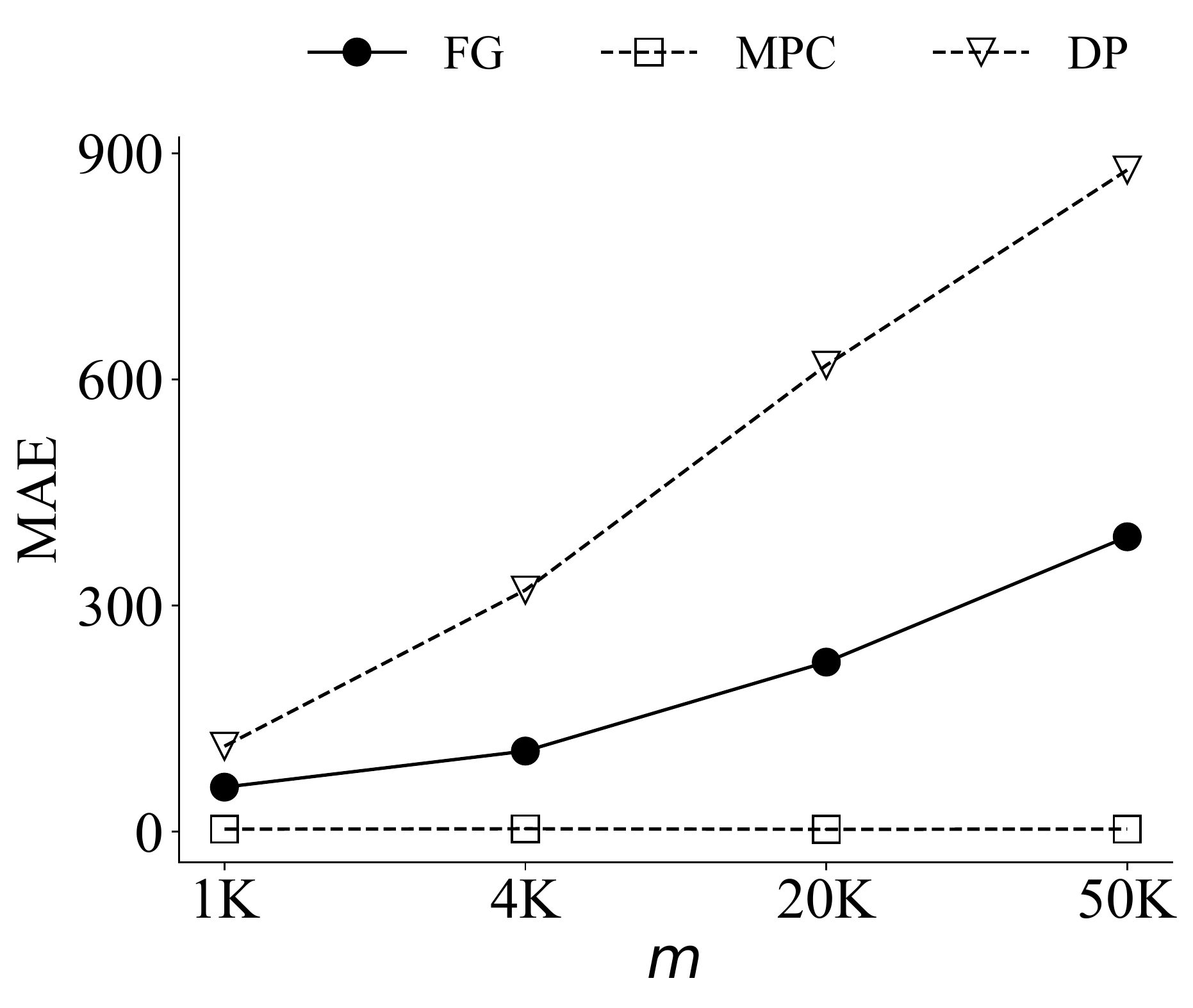}
      \caption{MAE vs. $m$.}
      \label{subfig:exp_scale_MAE}
  \end{subfigure}
  \hfill
    \caption{\small Scalability tests.  (Synthetic dataset)}\label{fig:exp_time_scale}
   \end{figure}
   
 \fakeparagraph{Varying $m$} For a fixed privacy budget $\epsilon$, we test the query accuracy of different methods across different $m$. The results (\figref{fig:exp_MRE_MAE_m}) verify that FG consistently outperforms the DP baseline by a significant margin. As $m$ grows from 500 to 3K, the MAE of the DP baseline grows from 75.3 to 246.5. In comparison, our FG provides a lower MAE, which grows from 51.8 to 117.4. From \figref{subfig:exp_MAE_m}, we could see that the MAE of FG grows more slowly than the DP baseline. 

 When $m$ increases, the true query count also increases. This results a decreasing trend for MRE for all methods, as shown in \figref{subfig:exp_MRE_m}. When $m=2K$, the MRE of the DP baseline is 0.517, while our FG's MRE is 0.238, which provides a 53.97\% improvement. For different $m$, the MRE of DP is about 2$\times$ as large (worse) as the one of our FG. \\

 \fakeparagraph{Query efficiency} We then conduct experiments to test the query efficiency for different methods. The results are shown in \figref{fig:exp_time}. The results on the scalability tests on the synthetic dataset are shown in \figref{fig:exp_time_scale}.

 \fakeparagraph{Gowalla dataset} For smaller $m$ in the range of 1K-4K, we show the query execution time of different methods in \figref{fig:exp_time}, using the Gowalla dataset. The query time using the MPC method grows linearly as $m$ grows (\figref{subfig:exp_time_m}), and stays at the same level across different $\epsilon$ (\figref{subfig:exp_time_eps}), but it is orders-of-magnitude larger (slower) than FG and DP. 

 \figref{subfig:exp_time_m_N_vs_C} and \figref{subfig:exp_time_eps_N_vs_C} show that the running time is not impacted much by whether we use the ground-truth graph or the noisy graph. The time grows roughly linearly when $m$ increases, but all FG variants finish within 4 seconds when $m=1K$, which is not comparable to the time-consuming MPC baseline.

 \fakeparagraph{Scalability tests} We further verify the scalability of our FG method by testing on the synthetically generated dataset. \figref{subfig:exp_time_m_scale} shows the query execution time when we scale $m$ to a million. Despite the linear growth of running time of MPC, it requires more than 13 hours to execute a query when $m=1M$. This is not acceptable in real-world LBS applications. In comparison, both the DP and our FG answers the query within 20 seconds at the largest input size. 
  
 For a given $m$, the running time of different methods do not vary much over different $\epsilon$ (\figref{subfig:exp_time_scale_eps}). In addition to the efficiency results, we also include some query accuracy results when we test the methods on extremely large inputs from the synthetic dataset (shown in \figref{subfig:exp_scale_MRE}-\figref{subfig:exp_scale_MAE}). These results further verify that our FG method is consistently outperforming the equally efficient DP method. \\

 \fakeparagraph{Summary} We conduct extensive experiments to examine both the effectiveness (in terms of query accuracy) and the efficiency of our proposed method FedGroup. In terms of query accuracy, FedGroup provides up to 50\% improvement over the DP baseline. Though the computationally heavy MPC baseline delivers the best accuracy, its inefficiency prevents it from real-world application. The FedGroup runs orders-of-magnitude faster than the MPC baseline and executes a query within 20 secondes for a million data owners. 
 
\section{Related Work}
\label{sec:relatedWork}

In this section, we first review the works on general-purposed data federations, and then discuss specific works on spatial data federation, as well as other spatial data management techniques considering data privacy. 

\fakeparagraph{General-purposed data federations} 
The concept of data federation dated back to the 90s when the need arises to manage autonomous but cooperating database systems \cite{sheth1990federated}. In recent years, there are a growing number of works on optimizing and adopting secure multiparty computations to build privacy-preserving data federations \cite{DBLP:journals/pvldb/BaterEEGKR17,bater2018shrinkwrap,bater2020saqe}. Bater et al. first implemented the MPC-based data federation system called SMCQL \cite{DBLP:journals/pvldb/BaterEEGKR17}, and then improved the system to Shrinkwrap \cite{bater2018shrinkwrap} by using differential privacy to reduce the amount of dummy records inserted during query execution in order to make the computation \textit{oblivious}. Its later effort, SAQE \cite{bater2020saqe},  further improves the tradeoff among utility, efficiency, and privacy by considering approximate query processing, as a by-product of injecting necessary noise for achieving differential privacy. Conclave \cite{DBLP:conf/eurosys/VolgushevSGVLB19} is another system which adopts secure query processing, \ie the private data are \textit{confidential} during the execution of the queries. 

These security-based systems offer strong privacy guarantee, however the efficiency issue is a bottle-neck. Most of the works are limited to only two parties in the secure computation, and experimental results \cite{tong2022hu} show that the current system is still far from being scalable in real-world data sizes. For example, joining a private table L (shared by multiple parties) and a public table R takes more than 25 minutes, even if R contains only 100 objects.

\fakeparagraph{Specialized spatial data federations} In an effort to develop specialized and more optimized spatial data federation, Shi et al. \cite{shi2021efficient} first studied how to efficiently perform approximate range aggregation queries without considering privacy. Later, a MPC-based system HuFu \cite{tong2022hu} was built, with specialized optimization made for spatial operators. Though significant improvements are made over other existing works (\eg Conclave), the running time is still a major concern. In our experiments, we demonstrate even the most simplified MPC baseline runs orders-of-magnitude slower than our proposed solution.

\fakeparagraph{Privacy-preserving spatial data management} In parallel, there are a fruitful amount works of spatial data management using DP or local DP (LDP). The DP model masks the existence of a single location record \cite{cormode2012differentially} 
or an entire trajectory \cite{he2015dpt}. There are also other efforts on protecting location data \cite{xiao2015protecting,DBLP:conf/icde/TaoTZSC020} or using LDP and its variants on offering theoretical guarantee to protect the location data \cite{chen2016private,andres2013geo,DBLP:journals/pvldb/CunninghamCFS21}. 
The DP and LDP model both differ from our privacy model, where a number of data owners exist.

\section{Conclusion}
\label{sec:conclusion}

In this paper, we target at the Federated Privacy-preserving Range Counting (FPRC) problem, where the number of federations (data owners) is large. By utilizing the social ties between data owners (such as family members), we propose a grouping-based framework called FedGroup. It reduces the amount of noise required by only injecting one instance of Laplace noise per group, while the DP baseline requires one instance of noise for each data owner, resulting an unacceptably large scale of noise, overweighing the true query answer. In addition, FedGroup is vastly more efficient than the MPC-based baseline, especially when we scale to a million data owners. Extensive experimental results verify the accuracy improvement and efficiency advantage of our proposed solution. For future works, we look into more advanced ways to compress data \cite{DBLP:journals/pvldb/LiuZSC20,DBLP:conf/sigmod/LiuS022,DBLP:conf/icde/LiuSC21} to achieve better privacy/utility tradeoff.

\bibliographystyle{plain}
\bibliography{lion}

\newpage
\appendix

\section{Additional technical details}
\label{sec:appendix}

For ease of presentation, the major notations used in this paper are summarized in Table \ref{tab:table1}. 

\begin{table}
	\centering \vspace{0ex}
	{\small\scriptsize
		\caption{\small Major notations used in this paper. \label{tab:table1}}
		\begin{tabular}{l|m{6cm}}
			\hline
			{\bf Symbol} & {\bf \qquad \qquad \qquad\qquad\qquad Description} \\ \hline 
			$m$   & The number of data owners (silos) \\
			\hline
			$u_1, \ldots, u_m$   & The data owners  \\
			\hline
			$D_1, \ldots, D_m$ & The database $D_i$ for data owner $i$ (\aka data silos) \\
			\hline
			$D$ & $D= \cup_{i=1}^m D_i$ is the spatial data federation.  \\
			\hline
			$\epsilon$  & The privacy budget for the query\\
			\hline
			$G_s=(V, E)$ & The spatial similarity graph\\
			\hline
			$w, w(e), \tilde{w}$ & The weight $w$ of the edge $e$, and the noisy weight $\tilde{w}$ \\
			\hline
			$T, |T|$ & The grid structure and the number of  grids \\
			\hline
			$v_i, \tilde{v_i}$ & The counts vector $v_i$ for a data owner $u_i$ and its noisy version. \\
			\hline
			$r$ & The threshold for determining whether two data owners are similar (using their spatial similarity of their databases) \\ 
			\hline
			$g$ & A $r$-group \\
			\hline
			$g_1, g_2, \ldots, g_\lambda$ & The assigned groups for data owners \\
			\hline
			$\lambda$ & The number of groups \\ 
			\hline
			$d$ & The largest degree of the graph \\
			\hline
			
		\end{tabular}
	}\vspace{0ex}
\end{table}

\subsection{Problem definition}

For easier understanding, we use a toy example in \figref{fig:problem} to illustrate the above concepts. 

\begin{example} 
	\label{example:range_count}
	As shown in \figref{fig:problem}, there are $m$ data owners (denoted by the mobile device icons), each owning a database $D_i$. Collectively, the union of all data silos forms the spatial data federation $D= \cup_{i=1}^m D_i$. Each data silo $D_i$ has a spatial database \cite{tong2022hu,shi2021efficient,DBLP:conf/kdd/DuTZTZ18}, denoted by the black dots on the grids. The range counting query $Q$ in this example is denoted by the dotted red circle (we omit the center point $q_0$ and the range $r$ in the figure). The red dots inside the circle are the data records falling within the range of the query $Q$. For example, for the first owner $u_1$'s database $D_1$, the count $Q(D_1)=3$. For the $i$-th data owner $u_i$'s database $D_i$, the count $Q(D_i)=2$. The range counting query in this example should return: 
	
	\begin{equation}
		Q=\sum_{i=1}^mQ(D_i). \label{eq:true_answer}
	\end{equation}

	It is the summation of the partial range count $Q(D_i)$ for each data silo $D_i$. 

\end{example}

\subsubsection{Data privacy and compute privacy}

In our problem, we consider both the input data privacy (\aka \textit{data privacy}) and the oblivious query evaluation (\aka \textit{compute privacy}) \cite{he2021practical} in the spatial data federation. 

For data privacy, we first identify the private data that needs to be protected. In our problem, the private data concerned is the spatial database $D_i$ for each data owner $u_i$. 

As for compute privacy, we refer to privacy leakage in the computation process. It refers to the instruction traces, program counters, and the execution time of a function based on the private inputs, which refer to $D_i$ for each data owner $u_i$.

\subsubsection{Adversary model}

For (1) the end user: an end user issues a range counting query $Q(r, q_0)$ to the service provider, and receives the query result $\tilde{Q}$. The end user is \textit{curious} and tries to obtain as much private information as possible from the query result $\tilde{Q}$. The private information here refers to the private database $D_i$ owned by each data owner $u_i$. 

For (2) the service provider $S$: upon receiving the query $Q(\cdot)$ from the end user, the service provider coordinates and executes the query and obtains an answer $\tilde{Q}$ to be returned the end user. During the execution of the query, the service provider may interact with data owners $u_1, \ldots, u_m$ and receives intermediate results. For example, in \figref{fig:problem}, $S$ receives the partial results 3 from $D_1$, $Q(D_i)$ from $D_i$, and 2 from $D_m$, respectively. $S$ is also assumed to be semi-honest, so $S$ tries to obtain as much private information about $D_1, \ldots, D_m$ as possible. In addition, $S$ also tries to obtain privacy disclosure \textit{during} the computation process, which refers to the evaluation process, including the program counters, the execution time and the instruction traces during its coordination of query execution together with the data owners. 

For (3) the data owners: each data owner $u_i$ is also assumed to be semi-honest and is \textit{curious} about the private information from other data owners. For example, $u_1$ may be curious about $u_3$ and tries to gain private information about $D_3$ during the execution of the query. The leakage may happen \textit{during} the computation process when $S$ coordinates each data owner to jointly compute the results.

\subsection{Baselines}
\label{subsec:baseline}

We start with a case study showing why \figref{fig:problem} fails the privacy requirements in \defref{def:problem}. Then we show two straw-man baselines using DP and MPC. 

\subsubsection{A case study of privacy breakdown}
\label{subsubsec:case_study}

It is not hard to see that the computation process shown in \figref{fig:problem} fails to satisfy the privacy requirements in FPRC as defined \defref{def:problem}, as the true answer computed by each party is directly shared with the semi-honest service provider $S$. In fact, R1 and R2 are both violated, and R3 is satisfied because there is no multiparty computation involved. We defer the detailed analysis to the appendix. 

\subsubsection{The DP baseline}
\label{subsubsec:dp_baseline}

We introduce a DP straw-man baseline by applying the Laplace mechanism to satisfy the privacy requirements, and then we analyze the scale of error introduced. 

\fakeparagraph{For each data owner} Upon receiving the query $Q(r, q_0)$, each data owner $u_i$ computes the partial true result by running the query on its private database to obtain $Q(D_i)$. Then, a single instance of Laplace noise injected to obtain:

\begin{equation}
	\tilde{Q}(D_i) = Q(D_i) + \text{Lap}(1/\epsilon). \label{eq:partial_sum}
\end{equation}
 
 The noisy partial result is sent to the service provider $S$. 

\fakeparagraph{Service provider $S$} After receiving the noisy partial count $\tilde{Q}(D_i)$ from data owners, $S$ aggregates all partial counts by taking a summation to obtain:

\begin{equation}
	\tilde{Q} = \sum_{i=1}^m \tilde{Q}(D_i). \label{eq:aggregate}
\end{equation}

Finally, $S$ returns the value $\tilde{Q}$ back to the end user. 

\fakeparagraph{Privacy analysis} It is rather straightforward to see that R1-R3 are satisfied as the sensitivity of the count query for each partial count $Q(D_i)$ is 1 for all data owners.

\fakeparagraph{Utility analysis} First, we notice that the randomness of $\tilde{Q}$ depends on the Laplace noise randomly generated in \equref{eq:partial_sum}. Since each random variable of Laplace noise has mean 0, the summation of $m$ such random variables also has mean 0. Thus, the returned noisy answer $\tilde{Q}$ in \equref{eq:partial_sum} is an unbiased estimator of the true answer $Q$. % given in \equref{eq:true_answer}. 

Thus, to measure the utility, we define the error as the variance of the returned noisy answer $\tilde{Q}$. 

\begin{equation}
	\text{Var}(\tilde{Q}) = \text{Var}(\sum_{i=1}^m\text{Lap}(1/\epsilon)) = m\cdot \text{Var}(\text{Lap}(1/\epsilon)) = 2m / \epsilon^2. 
\end{equation}

From this, we know that the error of $\tilde{Q}$ scales linearly with the number of federations $m$. In our setting, $m$ is potentially large (\eg millions of mobile devices), and we expect that the DP straw-man baseline yields a significant error. 

\subsubsection{The MPC baseline}
\label{subsubsec:smc_baseline}
We then show a straw-man MPC baseline by utilizing MPC techniques for different data owners to jointly compute the query result. The basic idea is that the $m$ data owners, together with the service provider $S$, jointly computes a noisy summation:

\begin{equation}
	\tilde{Q} = \sum_{i=1}^mQ(D_i) + \text{Lap}(1/\epsilon), \label{eq:smc_baseline}
\end{equation}

taking each of their own private data $D_i$ as private inputs in the MPC protocols. 

There is only one instance of Laplace noise injected to the true query result. Then, the service provider $S$ returns the computed result $\tilde{Q}$ to the end user. 

\fakeparagraph{Privacy analysis} The injected Laplace noise in \equref{eq:smc_baseline} ensures that the final returned query result satisfies R1 in \defref{def:problem}. R2 and R3 are both satisfied because that MPC protocol is used, and each data owner's private database $D_i$ is strictly confidential, without disclosing any information to the other parties. 

\fakeparagraph{Utility analysis} The utility of the straw-man SMC baseline is strictly better than the DP baseline, since there is only one single instance of Laplace noise injected. As compared to the $O(m)$ scale of error by the DP baseline, the MPC baseline's error is $O(1)$. 
	
\fakeparagraph{Time complexity} The MPC baseline computes a secure summation of $m$ data owners, plus the one instance of Laplace noise at the end. The time complexity is linear to the number of data owners: $O(m)$, but the communication cost between data owners is significant. Each data owner $u_i$ needs to wait for the partial result from previous $i-1$ data owners to continue the computation process.

\subsubsection{A case study of privacy breakdown}

As \figref{fig:problem} shows, each data owner $u_i$ computes the \text{partial} range count \wrt the data silo $D_i$. The service provider $S$ then aggregates all the partial counts by taking a summation $\tilde{Q} = \sum_{i=1}^m Q(D_i)$. Then, $S$ returns $\tilde{Q}$ back to the end user. This example results fails several privacy requirements in \defref{def:problem}, as illustrated next. 

First, R2 in \defref{def:problem} is definitely violated. The partial count is directly returned to  $S$. Since we assume that $S$ is semi-honest and adversary could exist on $S$, accessing the partial query result $Q(D_i)$ is a direct privacy threat to the private database $D_i$. Changing of one single record in $D_i$ directly impacts the partial count result $Q(D_i)$, and obviously this step does not satisfy the privacy guarantee in R2. 

Next, R1 in \defref{def:problem} is violated, as $\tilde{Q} = \sum_{i=1}^m Q(D_i)$ is directly returned to the end user. Note that $\tilde{Q}$ equals to the true count, as introduced in Example~\ref{example:range_count}. Changing any particular record in any data silo $D_i$ directly influences the query result $\tilde{Q}$ by the sensitivity of 1, and it fails to provide any privacy guarantee. We omit the formal arguments of showing why it does not satisfy $\epsilon$-DP as it is straightforward. 

Last, we note that R3 in \defref{def:problem} is actually satisfied, because the computation is solely computed by each data owner. Strictly speaking, side channel attacks could still be performed by the semi-honest service provider $S$ to observe the query execution time by each data owner in order to infer the size of private database $D_i$. However, since there is no multiparty computation involved in this step, we consider R3 is satisfied. In our adversary model, side channel attacks only occur when there is multiparty computation involved between the service provider $S$ and other data owners.

\subsubsection{DP baseline privacy analysis}

First we show that R2 in \defref{def:problem} is satisfied. It is straightforward as the sensitivity of the count query for the partial count $Q(D_i)$ is 1 for all data silos. Injecting a Laplace noise with the corresponding sensitivity suffices to provide $\epsilon$-DP. 

Then, we show that R1 in \defref{def:problem} is satisfied by applying the parallel composition theorem of DP. If we look at the entire database $D= \cup_{i=1}^m D_i$, which is the union of all data silos, then each data silo $D_i$ is a \textit{disjoint} subset of $D$. Changing one record in any data silo $D_i$ only changes one record in $D$, and vice versa. Thus, the parallel composition theorem of DP applies, \ie the final result computed in \equref{eq:aggregate} satisfies $\epsilon$-DP, because each data owner's partial result (computed by \equref{eq:partial_sum}) satisfies $\epsilon$-DP. 

R3 for the straw-man DP baseline is also satisfied because there is no multiparty computation involved during the entire process of query evaluation. 

\subsection{FedGroup details}

We provide a toy example next to better illustrate the concept. 

\begin{example}[Spatial similarity]
  \label{example:spatial_sim}
  In \figref{fig:example_spatial_sim_graph}, $D_1$ is the spatial database for data owner $u_1$ and $D_2$ is the spatial database for data owner $u_2$. The grid structure $T$ is shown in the figure, which contains $|T|=9$ grids over the spatial domain. Thus, we could construct $D_1$'s count vector as $v_1 = [0\ 1\ 1\ 1\ 0\ 0\ 2\ 5\ 0]$, where each entry corresponds to the count of the a particular grid, starting from the top-left corner and moving down row by row. Similarly, $D_2$'s count vector is $v_2= [0\ 0\ 1\ 1\ 0\ 0\ 1\ 4\ 0]$. Then, the similarity of between $D_1$ and $D_2$ is sim$(D_1, D_2)=\cos(v_1, v_2) = v_1 \cdot v_2 / \|v_1\|\|v_2\| = 24 / 24.658 = 0.973$.  
\end{example}

A toy example of the spatial similarity graph is given next.

\begin{figure}[t!]\centering 
	\scalebox{0.6}[0.6]{\includegraphics{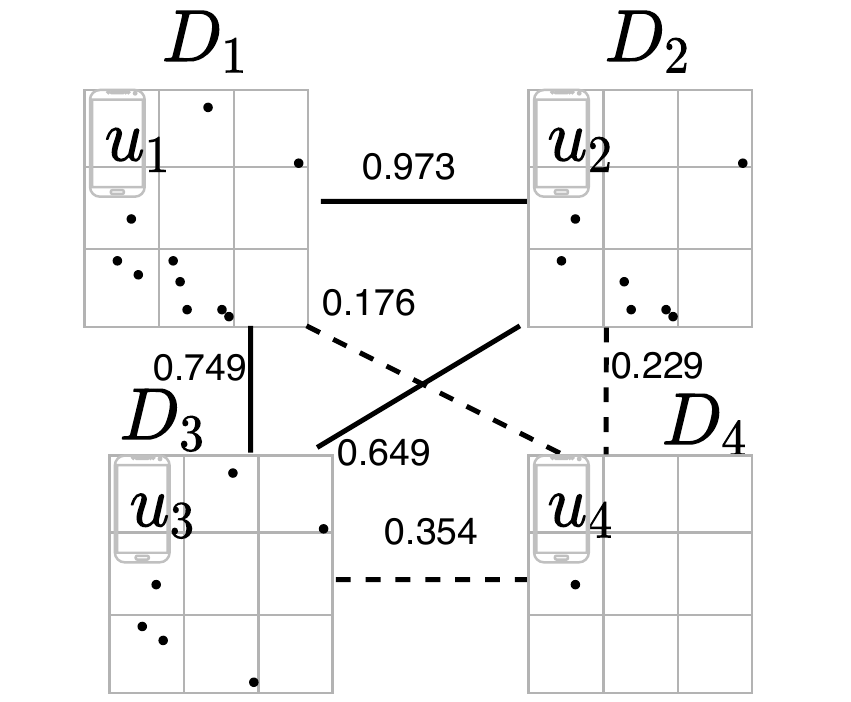}}
	\caption{\small An example of the spatial similarity graph. }
	\label{fig:example_spatial_sim_graph}
\end{figure}

\begin{example}[Spatial similarity graph $G_s$] \figref{fig:example_spatial_sim_graph} gives a toy example of a spatial similarity graph containing 4 data owners. There is an edge between each pair of data owners in $G_s$. The edge's weight is set as the spatial similarity defined in \equref{eq:spatial_sim}. As we have shown in \expref{example:spatial_sim}, the similarity between $u_1$ and $u_2$ equals to $\text{sim}(D_1, D_2)=0.973$, as denoted above the corresponding edge $e=(u_1, u_2)$. Similarly, $w(u_1, u_3)=0.749$, $w(u_3, u_2)=0.649$. Note that these three edges have edge weights larger than the threshold $r=0.5$ (as the threshold $r$ to be used in the $r$-clique in \secref{subsec:grouping}), and are denoted with solid lines in the figure. The other three edges, \ie $(u_1, u_4), (u_3, u_4), (u_2, u_4)$, have edge weights smaller than $r$, and are denoted as dashed lines in the figure. 
\end{example}

\begin{algorithm}[t]
	\DontPrintSemicolon
	\KwIn{$u_1, \ldots, u_m, D_1, \ldots, D_m, r$. }
	\KwOut{$G_s=(V, E)$.}
    $V := \{u_1, \ldots, u_m\}$  \;
    $E := \{\}$ \;
	\ForEach{$u_i \in V$}{
    \ForEach{$u_j \in V$ and $j > i$} {
      $e := (u_i, u_j)$ \;
      $e.\text{weight} := \text{sim}(D_i, D_j)$ \;
      \If{$e.\text{weight} > r$}{
        Insert $e$ to $E$ \;
      }
    }
	}

	\Return{$G_s=(V, E)$}\;
	\caption{\texttt{Non-privately Construct $G_s$} }
	\label{algo:construct_graph_plaintext}
\end{algorithm}

\fakeparagraph{MPC option} The most straightforward way to guarantee all privacy requirements is to use compute edge weight with confidentiality using MPC techniques. Each pair of data owners $u_i$ and $u_j$ uses MPC to securely compute their similarity according to \equref{eq:spatial_sim}, with their private database $D_i,D_j$ and the counts vector $v_i, v_j$ considered as the private inputs to the garbled circuits. Thus, each pair of data owners $u_i$ and $u_j$ uses a 2-party MPC to compute $w=\text{sim}(D_i, D_j)$ with confidentiality, and sends the weight $w$ to the service provider $S$. $S$ sets $s$ as the edge weight for edge $e=(u_i, u_j)$, and inserts $e$ to the edge set $E$. 

\fakeparagraph{DP option} We can also apply standard Laplace mechanism in DP to provide $\epsilon$-DP for each data silos to obtain a noisy version of the counts vector $\tilde{v_i}$ for each data owner $u_i$ (since the sensitivity of the counts within each grid is 1, \wrt to change of one record in a data silo). Then, a noisy similarity score could be computed for each pair of $u_i, u_j$, using $\cos(\tilde{v_i}, \tilde{v_j})$. The noisy weight $\tilde{w}$ is then used as a surrogate score as the spatial similarity score for the private inputs between the two data owners. The service provider $S$ computes noisy score for each pair of data owners, similar to the non-private version. 
However, instead of the \textit{exact} weight computed on the private inputs, now we use the noisy score computed on the noisy counts $\tilde{v_i}$ as the weight for each edge. 

Obviously the DP option introduces considerable amount of noise in the graph construction process, and the original weight $w$ could be much different from the noisy counterpart $w'$, depending on the privacy parameter $\epsilon$ and the randomized generation of the Laplace noise. Also, the grid structure may impact the scale of errors, because the cosine similarity function takes the dot product of two noisy counts, and at least the error scales linearly with the number of grids $|g|$.  
From \lineref{line:hybrid_r_l} to \lineref{line:hybrid_r_u}, we compare the computed noisy weight $\tilde{w}$ to the lower bound threshold $r_l$ first. If it is smaller than $r_l$, we will directly prune away this edge. Then, we compare it to the upper bound threshold $r_u$. If $\tilde{w}$ is larger than $r_u$, we will directly insert the edge into the graph.

\lineref{line:hybrid_r_m} shows the borderline case, when we invoke the heavy secure version of the similarity function computation. The detailed steps of the secure computation are omitted, and the main idea is to feed the private inputs $D_i$ and $D_j$ and use secure primitives (including secure addition, multiplications, and if conditions) to guarantee that the private inputs are kept confidential during the entire process of computation.

\begin{proof}

  We reduce the Minimum Clique Partition (MCP) problem to the DOG problem. Since MCP problem is an NP-hard problem, DOG shares the same hardness. 

  We show that the DOG problem has a solution with variance $2\lambda/ \epsilon^2$ if and only if the MCP problem has a solution with $\lambda$ partitions. 

1. MCP $\Longrightarrow$ DOG:
If the MCP has a solution of $\lambda$ partitions, we simply return the optimal partition as the grouping for the DOG problem. Since the partitions in the MCP are disjoint, they form $\lambda$ valid groups $g_1, g_2, \ldots, g_\lambda$. Since the MCP solution returns a partition of cliques, each partition is also a clique, which satisfy the requirement that each group is a $r$-clique. Thus, the solution provides a valid grouping with $\lambda$ groups. Then, due to \lemref{lemma:minimize_lambda}, we know the solution provides a noisy answer with variance $2\lambda/ \epsilon^2$ to the FPRC problem. 

2. DOG $\Longrightarrow$ MCP:
If the DOG problem has a solution providing noisy answers with variance $2\lambda/ \epsilon^2$, by \lemref{lemma:minimize_lambda}, we know that the number of groups is $\lambda$. Since each group in $g_1, \ldots, g_\lambda$ is disjoint, and each group is a $r$-clique, as required \defref{def:dog_grouping_problem}, the solution of the grouping is a solution to the MCP problem, which provides a partition of disjoint cliques, and the number of cliques is $\lambda$. 

\end{proof}

\begin{example}
  We demonstrate the greedy solution using the example in \figref{fig:example_spatial_sim_graph} as the input. The original graph $G_s$ contains 4 nodes: $u_1$-$u_4$. It only contains 3 edges: $(u_1, u_2), (u_1, u_3)$ and $(u_2, u_3)$. The complement graph, $G_c=(V, E_c)$, has the same vertex set $V$ and 3 edges: $(u_1, u_4), (u_2, u_4), (u_3, u_4)$, which are all edges that do not exist in the original edge set $E$.

  First, we find vertex $u_1$ and assign it to the 1st group. \textit{assigned}$[1]=1$, and $\lambda=1$. 
  
  Then, we look at vertex $u_2$. $u_2$ has only 1 neighbor in $G_c$, which is $u_4$ because of edge $(u_2, u_4)$, but $u_4$ has not been assigned to any group yet. So, the smallest possible group id is 1. So $u_2$ is assigned to the 1st group as well. \textit{assigned}$[2]=1$, and $\lambda=1$. 

  Next, we continue to $u_3$. Similar to $u_2$, the only neighbor $u_3$ has is $u_4$ because of edge $(u_3, u_4)$. The smallest available group id is also 1. So, $u_3$ is assigned to the 1st group. \textit{assigned}$[3]=1$, and $\lambda=1$. 

  Last, we look at vertex $u_4$. All of the other vertexes are $u_4$'s neighbors, and all of them are using the same group id $gid=1$. Thus, it uses the smallest available group id $gid=2$, which is a new group id. Now, \textit{assigned}$[1]=2$, and $\lambda=2$. The returned results are $g_1, g_2$, where $g_1=\{u_1, u_2, u_3\}$ and $g_2=\{u_4\}$. The number of groups $\lambda=2$, indicating that we found 2 groups. 

\end{example}

\subsubsection{Constrained version}
\label{subsec:constrained_version}
As we have briefly discussed in key idea of FedGroup, each data owner may still desire privacy protection even within families (groups). Thus, we propose two variants of the Data Owner Grouping (DOG) problem, taking personal privacy requirements into consideration. 

\begin{figure}[t]
	\centering \vspace{0ex}
	   \begin{subfigure}[b]{0.4\textwidth}
		   \centering
		   \includegraphics[width=\textwidth]{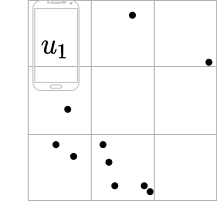}
		   \caption{Data silo $D_1$.}
		   \label{subfig:example_data_silo_d1}
	   \end{subfigure}
	   \hfill
		\begin{subfigure}[b]{0.4\textwidth}
		   \centering
		   \includegraphics[width=\textwidth]{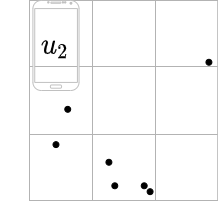}
		   \caption{Data silo $D_2$.}
		   \label{subfig:example_data_silo_d2}
	   \end{subfigure}
	
	\caption{\small Examples of two data silos $D_1$ and $D_2$.}	
	
	\label{fig:example_data_silo}
   \end{figure}

Note that we tradeoff the accuracy of the graph (as compared to the non-private version of constructing $G_s$, \aka the ground-truth graph) with the need to provide proven privacy guarantee. However our proposed hybrid solution strives an excellent balance between the utility/privacy tradeoff, as our experimental results show in \secref{sec:experiment}.

The lower bounds and the upper bound thresholds $r_l$ and $r_u$ could be derived analytically from the given privacy budget $\epsilon$ and the initial threshold $r$, given a confidence interval. In our experiments, we consider these two parameters as hyper-parameters, and our initial attempts have already provided satisfactory results.

Next we show the correctness of the greedy solution. 

\begin{theorem}
  The groups returned by \algoref{algo:greedy_grouping} are $r$-cliques and disjoint, which satisfy the requirements in the DOG problem (\defref{def:dog_grouping_problem}). 
\end{theorem}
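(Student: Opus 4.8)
The plan is to prove the two required properties separately: that the output groups are pairwise \emph{disjoint}, and that each group is an \emph{$r$-clique} (an $r$-group in the sense of \defref{def:r_group}). The key observation driving the whole argument is that \algoref{algo:greedy_grouping} is exactly greedy graph coloring executed on the complement graph $G_c$, where a ``group id'' plays the role of a color and each group $g_{gid}$ is the color class of vertices carrying that id. Both properties then follow from standard facts about proper colorings, once translated back through the complement relationship.

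First I would handle disjointness, which is the easier half. The argument is an invariant on the \texttt{assigned} array: initially every entry is $0$ (unassigned), and the main loop only ever processes a vertex returned by \texttt{NextUnassignedVertex}, i.e.\ a vertex whose entry is still $0$. When such a vertex $u$ is processed, \texttt{assigned}$[u]$ is set to a single positive value $gid$ and $u$ is inserted into the unique group $g_{gid}$; $u$ is never revisited because it is no longer unassigned. Hence each vertex of $V$ is placed in exactly one group, so the groups are pairwise disjoint (and in fact partition $V$).

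Next I would establish the clique property. The crux is the choice made at \lineref{line:greedy_nextgroupid}: \texttt{NextGroupId} returns the lowest group id that is \emph{not} used by any already-assigned neighbor of $u$ in $G_c$. I would phrase this as a coloring invariant maintained throughout the loop: at every step, no two vertices adjacent in $G_c$ carry the same group id. Granting this invariant, take any two data owners $u_i, u_j$ placed in the same group; they share a group id, so by the invariant they are \emph{not} adjacent in $G_c$. By definition of the complement graph, non-adjacency in $G_c$ is equivalent to adjacency in $\tilde{G_s}$, so the edge $(u_i, u_j)$ belongs to $\tilde{E}$. Thus every pair inside a group is connected in $\tilde{G_s}$, i.e.\ each group is a clique; and since every edge retained by the hybrid construction (\algoref{algo:construct_graph_hybrid}) represents a similarity exceeding the threshold $r$, each clique satisfies the edge-weight condition of \defref{def:r_group} and is therefore a genuine $r$-group.

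The part I expect to require the most care is formalizing the coloring invariant for \texttt{NextGroupId}. I would prove it by induction on the order in which vertices are processed: when $u$ is assigned, the only constraints come from its \emph{already-assigned} neighbors (unassigned neighbors still carry id $0$ and impose nothing), and \texttt{NextGroupId} selects an id distinct from all of them, so no monochromatic edge incident to $u$ is created, while edges among previously-assigned vertices are untouched and monochromatic-free by the inductive hypothesis. A secondary subtlety is being explicit that \texttt{$u$.neighbors()} refers to adjacency in $G_c$ rather than in $\tilde{G_s}$, so that the complement relationship can be invoked cleanly; once that is pinned down, the chain ``same id $\Rightarrow$ non-adjacent in $G_c$ $\Rightarrow$ adjacent in $\tilde{G_s}$'' is immediate and the result follows.
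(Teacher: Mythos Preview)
Your proposal is correct and follows essentially the same approach as the paper: both arguments hinge on the observation that \texttt{NextGroupId} never assigns a vertex the same id as any of its neighbors in $G_c$, so vertices sharing an id are non-adjacent in $G_c$ and hence adjacent in $\tilde{G_s}$; and both handle disjointness by noting each vertex is assigned exactly once. The only stylistic difference is that the paper states the clique part as a one-line contradiction, whereas you frame it as a coloring invariant proved by induction and also explicitly justify the ``$r$'' in $r$-clique via the edge-weight threshold in \algoref{algo:construct_graph_hybrid}---a point the paper's proof leaves implicit.
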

\begin{proof}
  We prove that each group is a $r$-cliques by contradiction. Suppose that a group $g$ is not clique, \ie there exists $u_i, u_j\in g$ and there is no edge between them in the original graph $\tilde{G_s}$ (\ie $(u_i, u_j)\notin \tilde{E}$). So, in the complement graph of $\tilde{G_s}$, $G_c$, such an edge $(u_i, u_j)$ exits. However, according to the algorithm, at \lineref{line:greedy_nextgroupid}, we look for the smallest-possible group id that has not been used by any of the neighbors of a given node, which means it is impossible that $u_i$ and $u_j$ are assigned to the same group $g$. Thus, this leads to a contradiction. Each group in $g_1, g_2, \ldots, g_\lambda$ are $r$-cliques. 

  The groups are disjoint as we only assign a data owner $u$ to a group once, and after it is assigned, the assignment is not changed anymore. 

  Thus, the groups returned by \algoref{algo:greedy_grouping} are $r$-cliques and disjoint, which satisfy the requirements in the DOG problem (\defref{def:dog_grouping_problem}). 
\end{proof}

 The reason we call the solution \textit{greedy} is that, when we try to find which group that a data owner $u$ should be assigned to, it always finds the smallest available group id among the neighbors of $u$. If all group ids previously used are occupied by some neighbor of $u$, then we increment the total number of groups and use a new group id. The detailed steps of the solution are shown in \algoref{algo:greedy_grouping}.
 
 The algorithm uses an array called \texttt{assigned} to keep track of a mapping storing whether a vertex (a data owner) has been assigned or not. If the $i$-th data owner has been assigned to a group with id $k$, then \texttt{assigned}$[i]=k$. If the data owner has not been assigned to any group yet, \texttt{assigned}$[i]=0$. 

Each group $g_{gid}$ is a group (implemented as a set), where $gid$ is the id of the group. For example, $g_1$ contains all the data owners for the first group. $g_{gid}$ contains all the data owners for the group with an id of value $gid$. For an unassigned vertex $u$, at \lineref{line:greedy_nextgroupid} (the details are omitted in this pseudo-code), we iterate over the neighbors of $u$ and looks for the smallest possible group id not been used by any of the neighbors. Then, the smallest possible group id is used as the group assignment for vertex $u$. 

In the end, the assigned groups $g_1, g_2, \ldots, g_\lambda$ are returned. Each group contains the corresponding data owners in the group, and $\lambda$ denotes the number of groups found by the greedy solution. 

\fakeparagraph{Utility analysis}
Injecting Laplace in the similarity computation would inevitably result in loss of accuracy in terms of whether an edge exits or not in $G_s$. Because of the use of MPC technique as a supplementary method, we ensure that edge is correctly identified if the spatial similarity between two data silos are high. The results are shown in the experimental results in \secref{sec:experiment}.

\begin{theorem}
  The hybrid solution in \algoref{algo:construct_graph_hybrid} satisfies all privacy requirements (R1-R3) in \defref{def:problem}. 
\end{theorem}
\begin{proof}

We start with the analysis on the intermediate results released by each data silo $D_i$, and show that R2 is satisfied. For a data owner $u_i$, the only intermediate results released the is noisy count vector $\tilde{v_i}$, constructed from the private data silo $D_i$. The true counts vector $v_i$ is constructed in a way illustrated in \secref{subsubsec:spatial_sim_graph}, by simply taking the count of data records within the range of each grid in the grid structure $g$. $v_i$ is simply the vector of counts for each grid, with length $|g|$. The noisy count vector $\tilde{v_i}$ is obtained by injecting Laplace noise with scale $1/\epsilon$ to each grid. The change of one data record in data silo $D_i$ may change the count in $v_i$ by at most one entry, and it is at most changed by +/-1 (\aka the sensitivity is 1). Thus, regarding to the released intermediate results, the noisy counts vector $\tilde{v_i}$, injecting $\text{Lap}(1/\epsilon)$ suffices to provide $\epsilon$-DP. In conclusion, R2 is satisfied. 

Then, we show that R1 is satisfied by utilizing the post-processing and parallel composition property of DP. For \algoref{algo:construct_graph_hybrid}, the output is a constructed spatial similarity graph $G_s$, which is stored on the service provider $S$. There is no released final result $\tilde{Q}$ to the end user yet. Thus, to show R1 is satisfied, we show that the constructed $G_s$ satisfies $\epsilon$-DP, \wrt changing one record in any data silo $D_i$. Let us focus on a particular data silo $D_i$ for data owner $u_i$, and consider a neighboring database $D'_i$ with only one data record different from $D_i$. Then, $u_i$ releases the noisy count vectors $\tilde{v_i}$ to the service provider $S$. As we have shown above (R2 is satisfied), $\tilde{v_i}$ satisfies $\epsilon$-DP. Then, we focus on \algoref{algo:construct_graph_hybrid}. For \lineref{line:hybrid_r_l} to \lineref{line:hybrid_r_u}, only $\tilde{v_i}$ is used. Because of the post-processing properties of DP, we know any edge inserted to $E$ (or the constructed $G_s$) does not violate $\epsilon$-DP. For the borderline case at \lineref{line:hybrid_r_m}, the MPC protocol is used, thus the private database is strictly confidential. Thus, we conclude that the constructed $G_s$ satisfies $\epsilon$-DP for data silo $D_i$. Lastly, because data silos $D_1, \ldots, D_m$ are disjoint with each other, changing one record in any one of them does not affect the others's noisy counts, we know the constructed $G_s$ satisfies $\max_{i=1}^m \epsilon = \epsilon$-DP, by the parallel composition theorem of DP. Here, each data silo is using the same privacy requirement $\epsilon$. 

It is straightforward to see that R3 satisfies because of the use of MPC protocol. The only place that the private inputs are used is at \lineref{line:hybrid_r_m}, and the private inputs are fed to the MPC protocol, and thus the confidentiality of the data are provided.   

\end{proof}

\subsubsection{Constrained version of the DOG problem}
\label{subsec:constrained_version_appendix}
As we have briefly discussed in key idea of FedGroup, each data owner may still desire privacy protection even within families (groups). Thus, we propose two variants of the Data Owner Grouping (DOG) problem, taking personal privacy requirements into consideration.

in this section, we propose two variants of the Data Owner Grouping (DOG) problem, taking personal privacy requirements into consideration. 

\begin{definition}[DOG-Global]
  \label{def:dog_global_constraint}
  The DOG-Global problem has the same inputs and outputs as the DOG problem in \defref{def:dog_grouping_problem}, with an additional requirement:
  each data owner's assigned group has at most $t$ data owners, \ie the group size should not exceed $t$.  
\end{definition}

This variant of the DOG problem is called the DOG-Global problem, as it has a global constraint for all data owners. Each data owner is only willing to share about their private data with at most $t$ other data owners. 

\begin{definition}[DOG-Personal]
  \label{def:dog_personal_constraint}
  The DOG-Personal problem has the same inputs and outputs as the DOG problem in \defref{def:dog_grouping_problem}, with the following additional requirements for each data owner:
  each data owner $u$ has a personal privacy requirement $t_u$, and $u$ could be sharing their private data with at most $t_u$ data owners, \ie the size of the group that $u$ is assigned to should not exceed $t_u$. 
\end{definition}

Different from the global constraint, which asks that every data owner shares the same privacy constraint $t$, the DOG-Personal variant allows each data owner to have a personal constraint $t_u$. 

We modify the greedy solution for the DOG problem to counter the privacy constraints (both the DOG-Global and the Personal variants). 

For the DOG-Global variant, the modification of the greedy solution is that whenever we assign a data owner $u$ to a potential group, we check the size of the group and avoid assigning the data owner to the group if the size reaches $t$. 

For the personal variant, the modification is that when we attempt to assign data owner $u$ to a potential group with id $i$, we check whether $|g_i| \geq t_u$, \ie the size of group $i$ is greater or equal to $t_u$. If that is the case, then the assignment would make $g_i$ become $t_u+1$, which violates the personal constraint of $u$. So, we pick the next smallest-numbered available group id as a candidate group for data owner $u$. 

Due to the space limit, we omit the detailed pseudo-code for the modified greedy solution. The correctness of the solutions are straightforward as the constrained versions only differ with the original DOG problem in the constraints part. 

\section{Additional experimental results}

\subsection{System configuration}
 The experiment is conducted on a CentOS Linux system with Intel(R) Xeon(R) Gold 6240R CPU @2.40GHz and 1007G memory. We implement the methods and conduct the experiments using Python and C. The MPC extension is provided by Obliv-C \cite{zahur2015obliv}. We use two processes on the same machine to simulate the client-server MPC protocol.
 
 To get a better understanding of the behavior of our proposed solution FG, we test the query accuracy when using the ground-truth spatial similarity graphs constructed from data owners' inputs and directly perform the grouping and query answer aggregation in FG. The tested method is called FG-C (where C stands for \underline{C}lear/ground-truth graph). For the clique listing variant FG-Ex, we also test it on the clear graph and use the name FG-ExC. The results are shown in \figref{subfig:exp_MRE_eps_C_vs_N} and \figref{subfig:exp_MAE_eps_C_vs_N}. Obviously, due to the privacy/utility tradeoff, FG-C is better than FG, and FG-ExC is slightly better than FG-Ex, because using the noisy graph (as opposed to the original graph) leads to a slightly worse MRE and MAE.  However, FG offers a similar level of MRE and MAE as compared to FG-ExC and FG-Ex, which shows that our greedy heuristics are effective as compared to other alternatives. 
 
 \fakeparagraph{Noisy vs. Original graph for different $m$}
 Similar to the observations we made for different $\epsilon$, FG provides a slightly worse MRE and MAE than the case when we directly use the ground-truth spatial similarity graph (\figref{subfig:exp_MRE_m_C_vs_N}), due to an inevitable utility/privacy tradeoff. However, the margin is small, and FG constantly beats the other heuristic approach FG-Ex.

\end{document}